\newcommand{\set}[1]{\left\{#1\right\}}
\renewcommand{\P}{\mbox{\sf P}\xspace}
\newcommand{\NP}{\mbox{\sf NP}\xspace}
\newcommand{\nP}{\mbox{\sf \#P}\xspace}
\newcommand{\nPQ}{\mbox{$\nP_\Rationals$}\xspace}
\newcommand{\CSP}{\mbox{\sf CSP}\xspace}
\newcommand{\nCSP}{\mbox{\sf \#CSP}\xspace}
\newcommand{\WCSP}{\textsf{weighted}\,\nCSP}
\newcommand{\FP}{\mbox{\sf FP}\xspace}
\newcommand{\FPnP}{\ensuremath{\FP^{\mbox{\scriptsize\sf \#P}}}\xspace}
\def\redT{\leq_\mathrm{T}}
\def\Rationals{\ensuremath{\mathbb{Q}}\xspace}
\def\xor{\oplus}
\def\substuple#1#2#3{{#1}^{#2=#3}}
\newcommand{\eval}{{\sc Eval}}
\newtheorem{theorem}{Theorem}
\newtheorem{lemma}[theorem]{Lemma}
\newtheorem{corollary}[theorem]{Corollary}
\newtheorem*{appendixlemma}{Lemma~\ref{lem:pinning}}
\title{The Complexity of Weighted Boolean \#CSP}
\author{Martin Dyer \\
School of Computing\\ University of Leeds\\
Leeds LS2~9JT, UK\and
Leslie Ann Goldberg \\
Department of Computer Science,\\
 University of Liverpool,\\
 Liverpool
L69 3BX, UK\and
Mark Jerrum \\
School of Mathematical Sciences,\\
Queen Mary, University of London\\
Mile End Road, London E1 4NS, UK}
\date{28 April 2008}
\begin{document}

\maketitle

\begin{abstract}
This paper gives a dichotomy theorem for the complexity of
computing
the partition function of an instance of a weighted Boolean constraint
satisfaction problem.  The problem is parameterised by a finite set~$\mathcal{F}$
of non-negative functions that may be used to assign
weights to the configurations (feasible solutions)
of a problem instance.
Classical constraint satisfaction problems correspond to the
special case of 0,1-valued functions.
We show that computing the partition function, i.e. the sum of the
weights of all configurations,
is \FPnP-complete unless
either
(1)~every function in~$\mathcal{F}$ is of ``product type'', or
(2) every function in~$\mathcal{F}$ is ``pure affine''.
In the remaining cases,
computing the partition function is
in \P.
\end{abstract}

\section{Introduction}
This paper gives a dichotomy theorem for the complexity of the
partition function of weighted Boolean constraint satisfaction
problems. Such problems are parameterised by a set~$\mathcal{F}$ of
non-negative functions that may be used to assign weights to
configurations (solutions) of the instance. These functions take the
place of the allowed constraint relations in classical constraint
satisfaction problems (CSPs). Indeed, the classical setting may be
recovered by restricting $\mathcal{F}$ to functions with range
$\{0,1\}$. The key problem associated with an instance of a weighted
CSP is to compute its partition function, i.e., the sum of weights
of all its configurations.
Computing the partition function
of a weighted CSP
may be viewed a generalisation of counting the number of satisfying solutions of a
classical
CSP.
Many partition functions from
statistical physics may be expressed as weighted CSPs. For example,
the \emph{Potts model}~\cite{welsh} is naturally expressible as a weighted CSP, whereas
in the classical framework only the ``hard core'' versions may be
directly expressed.
(The hard-core version of the
\emph{antiferromagnetic} Potts model corresponds to graph colouring
and the hard-core version of the \emph{ferromagnetic} Potts model is
trivial --- acceptable configurations colour the entire graph with a
single colour.)
A corresponding weighted version of the decision CSP was investigated by Cohen, Cooper, Jeavons and Krokhin~\cite{CoCoJK06}. This results in optimisation problems.

We use $\nCSP(\mathcal{F})$ to denote the problem of
computing the partition function of weighted CSP instances that
can be expressed using only functions from~$\mathcal{F}$.
We show in
Theorem~\ref{thm:main}
below
that if every function $f\in
\mathcal{F}$ is ``of product type'' then computing the partition
function $Z(I)$ of an instance~$I$ can be done in polynomial time.
Formal definitions are given later, but the condition of being ``of
product type'' is easily checked --- it essentially means that the
partition function factors.
We show further in
Theorem~\ref{thm:main}
that
if every function $f\in\mathcal F$ is ``pure affine'' then the
partition function of $Z(I)$ can be computed in polynomial
time. Once again, there is an algorithm to check whether $\mathcal
F$ is pure affine. For each other set~$\mathcal F$,
we show in Theorem~\ref{thm:main} that
computing the
partition function of a $\nCSP(\mathcal F)$ instance is complete for the class \FPnP.
The existence of algorithms for testing the properties of being
purely affine or of product type means that the dichotomy
is effectively decidable.

\subsection{Constraint satisfaction}

\emph{Constraint Satisfaction}, which originated in Artificial
Intelligence, provides a general framework for modelling decision
problems, and has many practical applications. (See, for
example~\cite{RoBeWa06}.) Decisions are modelled by
\emph{variables}, which are subject to \emph{constraints}, modelling
logical and resource restrictions. The paradigm is sufficiently
broad that many interesting problems can be modelled, from
satisfiability problems to scheduling problems and graph-theory
problems. Understanding the complexity of constraint satisfaction
problems has become a major and active area within computational
complexity~\cite{cks, hnbook}.

A Constraint Satisfaction Problem (CSP)
typically has a finite \emph{domain}, which we will
denote by $[q]=\set{0,1\ldots,q-1}$ for a positive integer~$q$.\footnote{
Usually $[q]$ is defined to be $\set{1,2,\ldots,q}$, but it is more convenient here
to start the enumeration of domain elements at~0 rather than~1.}
A \emph{constraint
language\/} $\Gamma$
with domain $[q]$
is a set of relations on~$[q]$. For example,
take $q=2$.
The
relation $R=\{(0,0,1)$, $(0,1,0)$, $(1,0,0)$, $(1,1,1)\}$ is a 3-ary
relation on the domain $\{0,1\}$, with four tuples.

Once we have fixed a constraint language $\Gamma$,
an \emph{instance\/} of the CSP is a set of \emph{variables\/}
$V=\{v_1,\ldots,v_n\}$ and a set of \emph{constraints}. Each
constraint has a \emph{scope,} which is a tuple of variables (for
example, $(v_4, v_5, v_1)$) and a relation from~$\Gamma$ of the same
arity, which constrains the variables in the scope. A
\emph{configuration} $\sigma$ is a function from~$V$ to~$[q]$. The
configuration~$\sigma$ is \emph{satisfying} if the scope of every
constraint is mapped to a tuple that is in the corresponding
relation.
In our example above, a configuration $\sigma$ satisfies the constraint with scope
$(v_4,v_5,v_1)$ and relation~$R$
if and only if it maps an odd number of the variables in
$\{v_1,v_4,v_5\}$ to the value~$1$.
Given an instance of a CSP with
constraint language $\Gamma$, the \emph{decision problem}
$\CSP(\Gamma$) asks us to determine whether any configuration is
satisfying. The \emph{counting problem} $\nCSP(\Gamma$) asks us to
determine the \emph{number} of (distinct) satisfying configurations.

Varying the constraint language~$\Gamma$ defines the classes \CSP and
\nCSP of decision and counting problems. These contain problems of
different computational complexities. For example, if
$\Gamma=\{R_1,R_2,R_3\}$ where $R_1$, $R_2$ and $R_3$ are the three
binary relations defined by $R_1=\{(0,1),(1,0),(1,1)\}$,
$R_2=\{(0,0),(0,1),(1,1)\}$ and $R_3=\{(0,0),(0,1),(1,0)\}$, then
$\CSP(\Gamma)$ is the classical 2-Satisfiability problem, which is in
\P. On the other hand, there is a similar constraint
language~$\Gamma'$ with four relations of arity~3 such that
3-Satisfiability (which is \NP-complete) can be represented in
$\CSP(\Gamma')$.  It may happen
that the counting problem is harder than the decision problem.
If $\Gamma$ is the constraint language of 2-Satisfiability above, then $\nCSP(\Gamma)$
contains the problem of counting independent sets in graph,
and is
\nP-complete~\cite{Valian79},
even if restricted to 3-regular graphs~\cite{Greenh00}.

Any  decision problem $\CSP(\Gamma)$ is in \NP, but not every problem
in \NP can be represented as a CSP. For example, the question ``Is
$G$ Hamiltonian?'' cannot
naturally
be expressed as a CSP, because the
property of being Hamiltonian cannot be captured by relations of
bounded size. This limitation of the class \CSP has an important
advantage. If $\P \neq \NP$, then there are problems
which are neither in \P nor \NP-complete~\cite{L75}. But, for
well-behaved smaller classes of decision problems, the situation can
be simpler. We may have a \emph{dichotomy theorem}, partitioning all
problems in the class into those which are in \P and those which are
\NP-complete. There are no ``leftover'' problems of intermediate
complexity. It has been conjectured that there is a dichotomy
theorem for \CSP.  The conjecture is  that $\CSP(\Gamma)$ is in \P
for some constraint languages $\Gamma$, and $\CSP(\Gamma)$ is \NP-complete
for all other constraint languages $\Gamma$. This conjecture appeared in a
seminal paper of Feder and Vardi~\cite{fv}, but has not yet been proved.

A similar
dichotomy, between \FP and \nP-complete, is conjectured for
\#CSP~\cite{BD}.  The complexity classes \FP and \nP are the
analogues of \P and \NP for counting problems.
\FP is simply the class of functions computable in deterministic
polynomial time. \nP is the class of integer functions that can be
expressed as the number of accepting computations of
a polynomial-time non-deterministic Turing machine.
Completeness in \nP is defined with respect to polynomial-time
Turing reducibility~\cite[Chap.~18]{Pa94}. 
Bulatov and Dalmau~\cite{BD} have shown in one
direction that, if $\nCSP(\Gamma)$ is solvable in polynomial time,
then the constraints in $\Gamma$ must have certain algebraic
properties (assuming $\P\neq\nP$). In particular, they must have a so-called \emph{Mal'tsev
polymorphism}. The converse is known to be false,
though it remains
possible that the dichotomy (if it exists) does have an algebraic
characterisation.

The conjectured dichotomies for \CSP and \nCSP are major open
problems for computational complexity theory. There have been many
important results for subclasses of \CSP and \nCSP. We mention the
most relevant to our paper here. The first decision dichotomy was that
of Schaefer~\cite{schaefer}, for the Boolean domain $\{0,1\}$.
Schaefer's result is as follows.
\begin{theorem}[Schaefer~\cite{schaefer}]
\label{thm:schaefer} Let $\Gamma$ be a constraint language with domain
$\{0,1\}$. The problem $\CSP(\Gamma)$ is
in \P if $\Gamma$ satisfies one of the conditions below. Otherwise,
$\CSP(\Gamma)$ is \NP-complete.
\begin{enumerate}[topsep=5pt]
\item $\Gamma$ is $0$-valid or $1$-valid.
\item $\Gamma$ is weakly positive or weakly negative.
\item $\Gamma$ is affine.
\item $\Gamma$ is bijunctive.
\end{enumerate}
\end{theorem}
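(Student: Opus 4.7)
The plan is to prove the two directions of the dichotomy separately: (a) a polynomial-time decision algorithm for every $\Gamma$ satisfying one of the four structural conditions, and (b) \NP-hardness for every $\Gamma$ falling into none of them.

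For the tractable side the algorithms can be presented case by case. If $\Gamma$ is $0$-valid (resp.\ $1$-valid), the all-zero (resp.\ all-one) assignment satisfies every constraint of every instance, so the algorithm simply outputs ``yes'' unconditionally. If $\Gamma$ is weakly positive or weakly negative, each $R\in\Gamma$ can be rewritten as a Horn (resp.\ dual-Horn) CNF, and the whole instance as a Horn-SAT formula, which is decided in linear time by unit propagation. If $\Gamma$ is affine, each $R\in\Gamma$ is the solution set over $\mathrm{GF}(2)$ of a finite system of linear equations; gathering these systems across all constraints yields a single linear system, solvable by Gaussian elimination. Finally, if $\Gamma$ is bijunctive, each $R\in\Gamma$ is expressible as a $2$-CNF and the instance reduces to $2$-SAT, solvable in linear time by the classical implication-graph algorithm.

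For the hardness side the strategy is to show that if $\Gamma$ satisfies none of the four conditions, then $\Gamma$ \emph{primitive positive-defines} (i.e., defines by an existentially quantified conjunction of constraints built from $\Gamma$) a relation whose satisfiability problem is known to be \NP-hard, such as 3-SAT, positive 1-in-3-SAT, or NAE-3-SAT. The four structural conditions are equivalent to closure of every $R\in\Gamma$ under a short list of specific Boolean operations: the constant-$0$ and constant-$1$ operations, binary conjunction, binary disjunction, the ternary minority $x\xor y\xor z$, and the ternary majority. Failure of any one of these closure properties produces an explicit relation $R\in\Gamma$ together with tuples in $R$ whose image under the relevant operation lies outside $R$, and the reduction consists of gluing such witnesses into gadgets that pp-define one of the canonical hard relations.

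The main obstacle is on the hardness side, where essentially every gadget reduction relies on being able to pin individual variables to $0$ or $1$, i.e., on the availability of the singleton unary relations. These singletons are not generally primitive-positive-definable from $\Gamma$ on the nose, so a preliminary ``pinning'' step is required to show that under the non-tractable hypotheses one can either pp-define the constants from $\Gamma$ directly or reduce in polynomial time to the enriched problem in which the constants are provided for free. Once pinning is secured, the remaining case analysis, though intricate, is essentially mechanical: each of the six failed closure properties contributes structural information that, combined with the pinning of variables, suffices to implement one of the canonical \NP-complete constraint languages and so complete the reduction.
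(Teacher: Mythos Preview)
The paper does not contain a proof of this theorem. Theorem~\ref{thm:schaefer} is stated purely as background material attributed to Schaefer~\cite{schaefer}; immediately after the statement the authors write that they ``will not give detailed definitions of the conditions'' and refer the reader to~\cite{schaefer} or to~\cite[Theorem~6.2]{cks}. So there is nothing in the paper to compare your proposal against.

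Your outline is a reasonable sketch of the standard approach to Schaefer's dichotomy (polynomial-time algorithms for each tractable class, then a pp-definability/gadget argument for the hardness side, with a preliminary step to handle constants). But since the paper treats this result as a black-box citation, the appropriate response here is simply that no proof is expected or present in the paper, and your attempt is being measured against a nonexistent target.
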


We will not give detailed definitions of the conditions in
Theorem~\ref{thm:schaefer}, but the interested reader is referred to
the paper~\cite{schaefer} or to Theorem~6.2 of the
textbook~\cite{cks}. An interesting feature is that the conditions
in~\cite[Theorem~6.2]{cks} are all checkable. That is, there is an algorithm to
determine whether $\CSP(\Gamma$) is in \P or \NP-complete, given a constraint
language~$\Gamma$ with domain~$\{0,1\}$.
Creignou and Hermann~\cite{CH} adapted Schaefer's decision
dichotomy to obtain a counting dichotomy for the Boolean domain.
Their result is as follows.
\begin{theorem}[Creignou and Hermann~\cite{CH}]
\label{thm:CH}
Let $\Gamma$ be a constraint language with domain
$\{0,1\}$.
The problem $\nCSP(\Gamma)$ is in\/ \FP\/
if\/ $\Gamma$ is affine. Otherwise, $\nCSP(\Gamma)$ is \nP-complete.
\end{theorem}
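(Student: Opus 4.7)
The plan is to prove the two directions of the dichotomy separately.

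For the tractable direction, when $\Gamma$ is affine every $R\in\Gamma$ is the solution set of a linear system over $\mathrm{GF}(2)$. An instance of $\nCSP(\Gamma)$ on $n$ variables is then itself a single $\mathrm{GF}(2)$-linear system: Gaussian elimination computes its rank $r$ in polynomial time, and the number of satisfying configurations is $0$ if the system is inconsistent and $2^{n-r}$ otherwise.

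For the hard direction, I would use Schaefer's dichotomy (Theorem~\ref{thm:schaefer}) as a scaffold. If $\Gamma$ satisfies none of Schaefer's tractability conditions then $\CSP(\Gamma)$ is already \NP-complete, and the standard reduction from $3$-Satisfiability can be arranged to be parsimonious, yielding \nP-completeness of $\nCSP(\Gamma)$. The remaining work is to handle the constraint languages that are Schaefer-tractable but not affine: $0$-valid, $1$-valid, weakly positive, weakly negative, and bijunctive. The natural source of hardness is Valiant's \nP-completeness of counting independent sets, equivalently of counting satisfying assignments to a monotone $2$-CNF. The underlying relation $R_{\mathrm{NAND}}=\{0,1\}^2\setminus\{(1,1)\}$ is both weakly negative and bijunctive, so in those two cases it is enough to express $R_{\mathrm{NAND}}$ using relations in $\Gamma$; the weakly positive case is symmetric under swapping $0$ and $1$. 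The $0$-valid and $1$-valid cases are handled by exhibiting a relation $R\in\Gamma$ whose solution set is not closed under the Mal'tsev operation $x\xor y\xor z$ and extracting, from a witnessing triple of tuples, a gadget whose counts encode counting independent sets.

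The main obstacle is that count-preserving reductions are considerably more delicate than decision reductions. Primitive positive definitions, the standard device in the decision-dichotomy setting, sum over existentially quantified variables and so do not preserve counts exactly. The reductions must therefore either be built from gadgets in which every existentially quantified variable is uniquely determined by the free variables (so the count is preserved up to a known multiplicative factor), or use polynomial interpolation, computing $\nCSP(\Gamma)$ on a carefully chosen family of related instances and solving a linear system for the desired count. A further subtlety in the $0$-valid-only and $1$-valid-only cases is the absence of the constant unary relations $\set{(0)}$ and $\set{(1)}$ from $\Gamma$, so one cannot freely pin variables; the non-affinity witness must then be deployed in a way that avoids pinning, likely by constructing a binary gadget that simulates $R_{\mathrm{NAND}}$ indirectly via equalities between fresh and original variables.
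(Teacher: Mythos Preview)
The paper does not prove Theorem~\ref{thm:CH} directly; it is quoted from Creignou and Hermann. What the paper does contain is Lemma~\ref{lem:CH}, explicitly billed as a generalisation of the Creignou--Hermann hardness direction, together with the observation (in the proof of Theorem~\ref{thm:main}) that the affine case reduces to counting solutions of a $\mathrm{GF}(2)$ linear system. Your tractable direction matches this exactly.

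For the hardness direction your route is genuinely different from the paper's. You follow the Schaefer scaffold: split according to which Schaefer class $\Gamma$ falls into, use a parsimonious version of the Schaefer reduction when $\Gamma$ is Schaefer-hard, and otherwise reduce from \#IS by simulating $R_{\mathrm{NAND}}$ inside $\Gamma$. The paper's Lemma~\ref{lem:CH} ignores the Schaefer classification entirely. It takes a single non-affine $f$, works by induction on arity, and uses Bulatov--Grohe (Corollary~\ref{bulgro}) as the sole hardness source: the arity-$2$ base case is immediate from \eval$(H)$, and for higher arity a short case analysis on a witnessing triple $a,b,c\in R_f$ with $a\xor b\xor c\notin R_f$ produces a lower-arity non-affine function by pinning, projection, or variable identification. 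The payoff is that the paper never has to argue that an arbitrary non-affine Horn or bijunctive language can express $R_{\mathrm{NAND}}$, and it never has to sort out overlapping Schaefer classes.

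The obstacle you flag---that in the $0$-valid/$1$-valid cases one cannot freely pin---is exactly what the paper's Lemma~\ref{lem:2pinning} dissolves: for \emph{every} $\mathcal{F}\subseteq\mathcal{F}_2$, one has $\nCSP(\mathcal{F}\cup\{\delta_0,\delta_1\})\redT\nCSP(\mathcal{F})$, proved by a short symmetry/linear-algebra argument rather than gadgetry. With pinning available for free, the $0$-/$1$-valid headache disappears, and this is what lets the inductive proof of Lemma~\ref{lem:CH} go through uniformly. Your plan is workable in principle (it is essentially the original Creignou--Hermann line), but the steps you mark as obstacles---making the Schaefer-hard reduction parsimonious, expressing $R_{\mathrm{NAND}}$ from an arbitrary non-affine Horn/bijunctive $\Gamma$, and building count-preserving gadgets without pinning---are each substantial, whereas the paper's route sidesteps all of them.
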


A constraint language~$\Gamma$ with domain $\{0,1\}$ is \emph{affine} if
every relation $R\in \Gamma$ is affine.
A relation $R$ is affine if the set of
tuples~$x\in R$ is the set of solutions to a system of linear
equations over GF($2$). These equations are of the form $v_1 \oplus
\cdots \oplus v_n =0$ and $v_1 \oplus \cdots \oplus v_n =1$ where
$\oplus$ is the \emph{exclusive or} operator. It is well known (see,
for example, Lemma~4.10 of~\cite{cks}) that a relation $R$ is affine
iff $a,b,c\in R$ implies $d=a\oplus b\oplus c\in R$. (We will use
this characterisation below.) There is an algorithm for determining
whether a Boolean constraint language $\Gamma$ is affine, so there is an
algorithm for determining whether $\nCSP(\Gamma)$ is in \FP or
\nP-complete.

\subsection{Weighted \nCSP}

The weighted framework of \cite{BG05} extends naturally to
Constraint Satisfaction Problems. Fix the domain $[q]$.
Instead of constraining a length-$k$ scope with an arity-$k$
relation on~$[q]$, we give a weight to the configuration on this scope
by applying a function $f$ from $[q]^{k}$ to the non-negative rationals.
Let $\mathcal{F}_q = \{ f: [q]^k \rightarrow
\Rationals^+ \mid k\in \mathbb{N}\}$
be the set of all such functions (of all arities).\footnote{%
We assume $0\in\mathbb{N}$, so we allow non-negative constants.}
Given a function~$f\in\mathcal{F}_q$ of arity~$k$, the
\emph{underlying relation} of~$f$ is given by $R_{f}=\{x\in[q]^k
\mid f(x)\not=0\}$. It is often helpful to think of $R_f$ as a
table, with $k$ columns corresponding to the positions of a
$k$-tuple. Each row corresponds to a tuple $x=(x_1,\ldots,x_k)\in
R_f$. The entry in row $x$ and column $j$ is $x_j$, which is a value
in $[q]$.

A \emph{weighted} \#CSP problem is parameterised by a finite subset
$\mathcal{F}$ of $\mathcal{F}_q$, and will be denoted by $\nCSP(\mathcal{F})$.
An instance $I$ of $\nCSP(\mathcal{F})$ consists of a set~$V$
of \emph{variables} and a set~$\mathcal{C}$ of \emph{constraints}.
Each constraint $C\in \mathcal{C}$ consists of a function
$f_C\in \mathcal{F}$ (say of arity~$k_C$) and a \emph{scope}, which is a sequence
$s_C=(v_{C,1},\ldots,v_{C,k_C})$ of variables from~$V$. The variables
$v_{C,1},\ldots,v_{C,k_C}$ need not be distinct. As in the
unweighted case,  a \emph{configuration} $\sigma$ for the
instance~$I$ is a function from $V$ to $[q]$.
The \emph{weight} of
the configuration $\sigma$ is given by
\[w(\sigma)=\prod_{C\in
\mathcal{C}} f_C(\sigma(v_{C,1}),\ldots,\sigma(v_{C,k_C})).\]
Finally, the \emph{partition function} $Z(I)$ is given, for instance $I$,  by
\begin{equation}
\label{CSPZ} Z(I)=\sum_{\sigma:V\rightarrow [q]} w(\sigma).
\end{equation}
In the computational problem $\nCSP(\mathcal{F})$, the
goal is to compute $Z(I)$, given an instance~$I$.

Note that an (unweighted) CSP counting problem $\nCSP(\Gamma)$ can
be represented naturally as a weighted CSP counting problem. For
each relation~$R\in \Gamma$, let $f^R$ be the indicator function for
membership in~$R$. That is, if $x\in R$ we set $f^R(x)=1$. Otherwise
we set $f^R(x)=0$. Let $\mathcal{F}=\{f^R \mid R \in \Gamma\}$. Then
for any instance $I$ of $\nCSP(\Gamma)$, the number of satisfying
configurations for~$I$ is given by the (weighted) partition
function $Z(I)$ from~(\ref{CSPZ}).

This framework has been employed previously in connection with \emph{graph homomorphisms}~\cite{bw}. Suppose $H=(H_{ij})$ is any symmetric square matrix~$H$ of rational numbers. We view $H$ as being an edge-weighting of an undirected graph $\mathcal{H}$, where a zero weight in $H$ means that the corresponding edge is absent from $\mathcal{H}$.  Given a (simple) graph $G=(V,E)$ we consider computing the partition function
\begin{equation*}
Z_H(G) = \sum_{\sigma:V\rightarrow[q]}w(\sigma),\quad\textrm{where}\ \
w(\sigma)=\prod_{\{u,v\}\in E} H_{\sigma(u)\sigma(v)}.
\end{equation*}
Within our framework above, we view $H$ as the binary function $h:[q]^2\to \mathbb{R}$, and the problem is then computing the partition function of $\#CSP(\set{h})$.

Bulatov and Grohe~\cite{BG05} call~$H$
\emph{connected} if $\mathcal{H}$ is connected and
\emph{bipartite} if $\mathcal{H}$ is bipartite. They give
the following dichotomy theorem for non-negative~$H$.\footnote{This is not quite the original statement of the theorem. We have chosen here to restrict all inputs to be rational, in order to avoid issues of how to represent, and compute with, arbitrary real numbers.}
\begin{theorem}[Bulatov and Grohe~\cite{BG05}]
\label{thm:bulgro} Let $H$ be a symmetric matrix with non-negative
rational entries.
\begin{enumerate}
\item If $H$ is connected and not bipartite, then computing $Z_H$ is in\/
\FP if the rank of~$H$ is at most~$1$; otherwise computing $Z_H$ is \nP-hard.
\item If $H$ is connected and bipartite, then computing $Z_H$
is in \FP if
the rank of~$H$ is at most~$2$; otherwise computing $Z_H$ is \nP-hard.
\item If $H$ is not connected, then computing $Z_H$ is in \FP
if each of its connected components satisfies the corresponding
conditions stated in (1) or~(2); otherwise computing $Z_H$ is \nP-hard.\qed
\end{enumerate}
\end{theorem}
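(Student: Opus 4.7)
The plan is to separate tractability from hardness. On the tractable side, a symmetric non-negative matrix~$H$ of rank~$1$ has the form $H_{ij}=a_ia_j$ for a non-negative vector~$a$, so
\[
Z_H(G)=\sum_{\sigma:V\to[q]}\prod_{\{u,v\}\in E}a_{\sigma(u)}a_{\sigma(v)}=\prod_{v\in V}\Bigl(\sum_{i\in[q]}a_i^{\deg(v)}\Bigr),
\]
which is immediate to compute. A connected bipartite~$H$ of rank~$2$ decomposes into zero diagonal blocks with an off-diagonal block of rank~$1$; the partition function vanishes unless~$G$ is bipartite (testable in polynomial time), and for bipartite~$G$ the same factorisation works once a bipartition is fixed. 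For disconnected~$H$, any configuration with non-zero weight maps each connected component of~$G$ into a single connected component of~$\mathcal{H}$, so $Z_H(G)$ splits as a sum of products over such choices, each factor reducing to the connected case.

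For hardness, the approach is gadget construction combined with polynomial interpolation. Given~$H$ and a pair of distinguished vertices, one constructs a family of gadgets $\Theta_k$ (for instance, iterated thickenings or short paths threading through~$H$) whose ``effective edge weight'' between the two terminals is a $q\times q$ matrix expressible in terms of the eigendata of~$H$. Varying~$k$ yields distinct matrices, and Lagrange interpolation then lets one simulate, on top of $\nCSP(\{H\})$, any matrix~$H'$ in the linear span of the family. Targeting a hard~$H'$---for example, a matrix encoding counting independent sets or an appropriate two-spin Ising partition function---completes the reduction.

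The main obstacle, and the real content of the Bulatov--Grohe theorem, is showing that outside the tractable cases the span of gadget matrices is rich enough to reach such a hard target. This calls for a spectral analysis of~$H$: Perron--Frobenius controls the eigenstructure, and one must argue that if $H$ is connected, non-bipartite, of rank~$\ge 2$, or connected bipartite of rank~$\ge 3$, then the eigenvalues have enough distinct magnitudes to make the interpolation non-degenerate, whereas in the tractable cases the spectrum collapses and forces the factorisation above. Keeping track of the bipartite subcase, where an extra symmetry between the two sides of~$\mathcal{H}$ shifts the rank threshold from~$2$ to~$3$, and reducing the disconnected problem to its connected pieces without losing hardness, are the principal bookkeeping hurdles.
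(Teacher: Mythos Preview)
The paper does not prove this theorem. Theorem~\ref{thm:bulgro} is quoted verbatim from Bulatov and Grohe~\cite{BG05} and marked with \qedsymbol\ precisely because no proof is given; the paper treats it as a black box, invoking only the $2\times2$ special case stated as Corollary~\ref{bulgro}. So there is nothing in the paper to compare your sketch against.

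That said, your outline is a reasonable caricature of the actual Bulatov--Grohe argument: the tractable side is essentially as you describe, and the hard side does proceed by thickening/stretching gadgets and interpolation to manufacture a $2\times2$ hard matrix. Two caveats if you ever try to flesh it out. First, your rank-$1$ formula $Z_H(G)=\prod_v\bigl(\sum_i a_i^{\deg(v)}\bigr)$ is not quite right as written: the weight of a configuration is $\prod_v a_{\sigma(v)}^{\deg(v)}$, which does factor over vertices, but the inner sum should read $\sum_i a_i^{\deg(v)}$ only because the product over edges contributes $\deg(v)$ copies of $a_{\sigma(v)}$; your displayed equality is correct but the middle expression obscures why. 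Second, the bipartite rank-$2$ case is more delicate than ``the partition function vanishes unless $G$ is bipartite'': one must also sum over the two ways to align the bipartition of $G$ with that of $\mathcal{H}$, and over connected components of $G$ separately. None of this is a genuine gap in your plan, but since the present paper neither needs nor supplies any of it, the right answer here is simply that the statement is imported, not proved.
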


Many partition functions arising in statistical physics may be viewed as
weighted \nCSP problems. An example is the $q$-state Potts model
(which is, in fact, a weighted graph homomorphism problem). In
general, weighted \nCSP is very closely related to the problem of
computing the partition function of a Gibbs measure in the framework
of Dobrushin, Lanford and Ruelle (see~\cite{bw}).
See also the framework of Scott and Sorkin~\cite{ss}.

\subsection{Some Notation}

We will call the class of (rational) weighted \#CSP problems
\emph{weighted} \nCSP. The sub-class having domain size $q=2$ will be called weighted \emph{Boolean} \nCSP, and will be the main focus of this paper.
We will give a dichotomy theorem for weighted Boolean \nCSP.

Since weights can be arbitrary non-negative rational
numbers, the solution to these problems is not an integer in general.
Therefore $\nCSP(\mathcal{F})$ is not necessarily in the class~\nP.
However, Goldberg and Jerrum~\cite{GJ06} have observed that $Z(I)=\tilde{Z}(I)/K(I)$, where
$\tilde{Z}$ is a function in \nP and $K(I)$ is a positive integer computable in \FP.
This follows because, for all $f\in\mathcal{F}$, we can ensure that
$f(\cdot)=\tilde{f}(\cdot)/K(I)$, where $\tilde{f}(\cdot)\in\mathbb{N}$,
by``clearing denominators''. The denominator $K(I)$
can obviously be computed in polynomial time, and it is straightforward to show
that computing $\tilde{Z}(I)$ is in $\nP$, so the characterisation of~\cite{GJ06} follows.
The resulting complexity class, comprising functions which are a function in \nP divided by a function in \FP, is named \nPQ in~\cite{GJ06}, where it is used in the context of approximate counting. Clearly we have
\[ \WCSP\ \subseteq\ \nPQ\ \subseteq\ \FPnP.\]
On the other hand, if $Z(I)\in\WCSP$ is \nP-hard, then, using an
oracle for computing $Z(I$), we can construct a \nP oracle
$\tilde{Z}(I)$ as outlined above.
(Note that $Z(I)\notin \nP$ in general.)
Using this, we can compute any
function in \FPnP with a polynomial time-bounded oracle Turing
machine. Thus any \nP-hard function in \textsf{weighted}\,\nCSP is
complete for \FPnP. We will use this observation to state our main
result in terms of completeness for the class \FPnP.

We make the following definition, which relates to the discussion above. We will say that $\mathcal{F}\subseteq \mathcal{F}_q$ \emph{simulates} $f\in\mathcal{F}_q$ if, for each instance $I$ of $\nCSP(\mathcal{F}\cup\set{f})$, there is
a polynomial time computable
instance $I'$ of $\nCSP(\mathcal{F})$,
such that $Z(I)=\varphi(I)Z(I')$ for some $\varphi(I)\in\Rationals$
which is \FP-computable. This generalises the notion of \emph{parsimonious reduction}~\cite{Pa94} among problems in \nP. We will use $\redT$ to denote the relation ``is polynomial-time Turing-reducible to'' between computational problems. Clearly, if $\mathcal{F}$ simulates $f$, we have $\nCSP(\mathcal{F}\cup\set{f})\redT\nCSP(\mathcal{F})$.
Note also that, if $\tilde{f}=Kf$, for some constant $K>0$, then
$\set{f}$ simulates $\tilde{f}$. Thus there is no need to distinguish between
``proportional'' functions.

We use the following terminology for certain functions.
Let $\chi_{=}$ be the binary \emph{equality} function defined on $[q]$ as follows. For any element
$c\in[q]$, $\chi_=(c,c)=1$ and for any pair $(c,d)$ of distinct
elements  of~$[q]$, $\chi_{=}(c,d)=0$. Let $\chi_{\not=}$ be the binary
\emph{disequality} function given by $\chi_{\not=}(c,d)=1-\chi_{=}(c,d)$
for all $c,d\in[q]$.\footnote{%
A more general disequality function is defined in the Appendix.}
We say that a function~$f$ is of {\it product type\/} if $f$ can be
expressed as a product of unary functions and binary functions of
the form~$\chi_=$ and $\chi_{\neq}$.

We focus attention in this paper on the Boolean case, $q=2$. In this case,
we say that a function~$f\in\mathcal{F}_2$ has \emph{affine support}
if its underlying relation
$R_f$, defined earlier, is affine.
We say that $f$ is \emph{pure
affine} if it has affine support and range $\{0,w\}$
for some~$w>0$.
Thus a function is pure affine if and only if it is a
positive real multiple of some (0,1-valued) function
which is affine over GF(2).

\subsection{Our Result}

Our main result is the following.
\begin{theorem}
\label{thm:main} Suppose $\mathcal{F}\subseteq \mathcal{F}_2
= \{f : \set{0,1}^k\to \mathbb{Q}^+\mid k \in \mathbb{N}\}$.
If
every function in $\mathcal{F}$ is of product type then
$\nCSP(\mathcal{F})$ is in \FP. If every function in
$\mathcal{F}$ is pure affine then $\nCSP(\mathcal{F})$ is in
\FP. Otherwise, $\nCSP(\mathcal{F})$ is \FPnP-complete.
\end{theorem}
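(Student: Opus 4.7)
For the tractability claims, direct algorithms suffice. If every $f\in\mathcal{F}$ is of product type, expand each constraint into its unary and binary $\chi_=,\chi_{\not=}$ factors. The binary factors impose an equality/disequality structure on the variables; union-find detects any inconsistency and otherwise determines each connected component up to a single Boolean choice, so $Z(I)$ factors as a product over components of a two-term sum of products of the relevant unary weights. If every $f\in\mathcal{F}$ is pure affine, write $f=w_f\cdot\mathbf{1}_{R_f}$; then $Z(I)=\bigl(\prod_C w_{f_C}\bigr)\cdot |S|$, where $|S|$ is the number of joint solutions of the union of the affine systems defining the $R_{f_C}$, obtainable by Gaussian elimination over GF($2$).

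For hardness, assume $\mathcal{F}$ contains some $f$ not of product type and some $g$ (possibly the same) that is not pure affine. The key technical tool is the pinning lemma (Lemma~\ref{lem:pinning}), which guarantees that, outside the two tractable classes, $\mathcal{F}$ simulates the constant unary functions $\delta_0$ and $\delta_1$. Pinning lets us substitute any variable by $0$ or $1$, isolating arbitrary restrictions of any $f\in\mathcal{F}$ and, in particular, recovering weighted versions of the underlying relations $R_f$. Proving this lemma in the required generality is one of the nontrivial pieces of work, but is essentially independent of what follows. If some $f\in\mathcal{F}$ has non-affine underlying relation, then by the Creignou--Hermann dichotomy (Theorem~\ref{thm:CH}) the problem $\nCSP(\{R_f\})$ is \nP-hard; a short interpolation/normalisation argument then simulates the indicator $\mathbf{1}_{R_f}$ from $f$, transferring \nP-hardness to $\nCSP(\mathcal{F})$, which is \FPnP-complete by the observation linking \WCSP with \FPnP.

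The remaining and hardest subcase is when every $f\in\mathcal{F}$ has affine support, so the unweighted counting problem is in \FP and any hardness must be produced by the weight pattern. The plan is to simulate, by restricting and summing out variables, a symmetric binary weighted function whose $2\times 2$ matrix $H$ lies outside the tractable class of Bulatov--Grohe (Theorem~\ref{thm:bulgro}), yielding \nP-hardness and hence \FPnP-completeness. The main obstacle, and the crux of the whole proof, is a structural classification of Boolean functions with affine support, essentially the statement that the only multiplicative weight patterns compatible with an affine support are those generated by products of unaries with $\chi_=$ and $\chi_{\not=}$---that is, precisely the product-type functions---while constant weight patterns give the pure affine functions. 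Once this classification is in hand, the required non-tractable $H$ can always be extracted by restriction from an $f$ lying outside both tractable classes, and Theorem~\ref{thm:bulgro} closes the proof.
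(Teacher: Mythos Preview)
Your tractability arguments match the paper's. The hardness outline has a real gap in the affine-support case and a smaller one in the non-affine case.

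For non-affine support you propose to simulate the indicator $\mathbf{1}_{R_f}$ from $f$ by ``a short interpolation/normalisation argument'' and then invoke Theorem~\ref{thm:CH}. That simulation is not clear: $f$ may take arbitrarily many distinct positive values on its support, and there is no evident way to flatten them to~$1$. The paper does not attempt this; it reproves the Creignou--Hermann induction directly for weighted~$f$ (Lemma~\ref{lem:CH}), carrying the weights through a case analysis on triples $a,b,c\in R_f$ with $a\oplus b\oplus c\notin R_f$ and finishing the base cases via Corollary~\ref{bulgro}.

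The more serious issue is the affine-support case. You plan to extract a non-tractable binary $H$ ``by restriction from an $f$ lying outside both tractable classes,'' but no such single function need exist. Take $\mathcal{F}=\{\oplus_3,U_2\}$: here $\oplus_3$ is pure affine and $U_2$ is of product type, so each lies inside one of the two tractable classes and $\nCSP$ of either singleton is in \FP; hardness arises only from their interaction. Restrictions and projections of $\oplus_3$ alone yield only $\chi_=$, $\chi_{\neq}$, and constants---never a rank-$2$ non-bipartite $H$. The paper's route is genuinely different and uses both functions in tandem: from $g$ (not pure affine) it extracts a nontrivial unary weight $U_\lambda$ with $\lambda\neq1$ (Lemma~\ref{six}); from $f$ (not of product type, affine support) it either obtains hardness directly via a Cauchy--Schwarz rank argument (Lemma~\ref{four}) or isolates a copy of $\oplus_3$ or $\neg\oplus_3$ (Lemma~\ref{two}); and it then reduces from the Weight Enumerator of a linear code at parameter~$\lambda$, which is \nP-hard for positive $\lambda\neq1$ (Lemmas~\ref{WE} and~\ref{one}). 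This Weight Enumerator detour, absent from your sketch, is exactly what handles the case where no single function in~$\mathcal{F}$ is intractable on its own. Your claimed ``structural classification'' is also not correct as stated: an affine-supported function may carry an arbitrary pattern of positive weights on its support, not only multiplicative or constant ones.
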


\begin{proof}
Suppose first that $\mathcal{F}$ is of product type.
In this case the partition function $Z(I)$ of an instance~$I$ with variable
set~$V$
is easy to evaluate
because it can be factored into easy-to-evaluate pieces:
Partition the variables in~$V$ into equivalence classes
according to whether or not they are related by an equality or disequality
function.
(The equivalence relation on variables here is ``depends linearly
 on''.)  An equivalence class consists of
two (possibly empty) sets of variables $U_1$ and $U_2$. All of the variables
in~$U_1$ must be assigned the same value by a configuration~$\sigma$
of nonzero weight,
and
all
variables in~$U_2$ must be assigned the other value.
Variables in $U_1\cup U_2$ are not related by equality or disequality
to variables in $V\setminus(U_1\cup U_2)$.
The equivalence class contributes one weight, say $\alpha$, to the partition function
if variables in $U_1$ are given value ``$0$'' by~$\sigma$ and it contributes
another weight, say $\beta$, to the partition function
if variables in
$U_1$
are given value ``$1$'' by~$\sigma$.
Thus, $Z(I)=(\alpha+\beta)Z(I')$, where $I'$ is the instance formed from~$I$ by
removing this equivalence class.
Therefore, suppose we choose any equivalence class and remove its variables.
Since $\mathcal{F}$ contains only unary, equality or binary disequality constraints, we can also remove all functions involving variables in $U_1\cup U_2$ to give $\mathcal{F}'$. Then $I'$ is of product type with fewer variables, so we may compute $Z(I')$ recursively.

Suppose second that $\mathcal{F}$ if pure affine.
Then $Z(I)=\prod_{f\in\mathcal{F}} w_f^{k_f} Z(I')$, where
$\{0,w_f\}$ is the range of~$f$, $k_f$ is
the number of constraints involving $f$ in~$I$, and $I'$ is the instance
obtained from~$I$ by replacing every function~$f$ by its underlying
relation~$R_f$ (viewed as a function with range $\{0,1\}$).
$Z(I')$ is easy to evaluate, because this is just counting
solutions to a linear system
over GF($2$), as Creignou and Hermann have observed~\cite{CH}.

Finally, the  \nP-hardness in Theorem~\ref{thm:main} follows from
Lemma~\ref{seven} below.
\end{proof}

\begin{lemma}
\label{seven}
If $f\in\mathcal{F}_2$ is not of product
type and $g\in\mathcal{F}_2$ is not pure affine then $\nCSP(\{f,g\})$ is
\nP-hard.
\end{lemma}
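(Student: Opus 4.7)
The plan is to reduce from the $\nP$-hard cases of the Bulatov--Grohe dichotomy for graph-homomorphism partition functions (Theorem~\ref{thm:bulgro}). Concretely, the target is to use $\{f,g\}$ to simulate a symmetric binary function $h\in\mathcal{F}_2$ whose $2\times 2$ matrix $H$ is connected, not bipartite, and of rank greater than one; equivalently, a binary $h$ that is not itself of product type, meaning $H$ has a nonzero off-diagonal entry, at least one nonzero diagonal entry, and violates the rank-one identity $H_{00}H_{11}=H_{01}H_{10}$.

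The workhorse is the pinning lemma (Lemma~\ref{lem:pinning}), which allows the unary pinning functions $\delta_0,\delta_1$ (defined by $\delta_i(i)=1$, $\delta_i(1-i)=0$) to be simulated from $\{f,g\}$. With pinning available, any arity-$k$ function can be restricted to a binary function by fixing $k-2$ of its arguments to chosen Boolean values. The first substantive step is to show that, since $f$ is not of product type, at least one such restriction of $f$ is a binary function $f'$ that is also not of product type. I would prove this by induction on arity: assuming every restriction of $f$ to two free variables is of product type, one attempts to glue the local unary and (dis)equality factorisations across overlapping pairs of variables to express $f$ itself as a product, contradicting the hypothesis. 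This step is where I expect the main obstacle to lie, because the gluing must be consistent across overlapping pairs, and degenerate cases (where pinning forces $f$ to vanish identically) require separate handling.

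Next I exploit the hypothesis that $g$ is not pure affine. There are two sub-cases. In the first, the support $R_g$ is not affine over $\mathrm{GF}(2)$; then a positive rescaling of a suitably pinned restriction of $g$ yields the indicator of a non-affine Boolean relation, and $\nP$-hardness follows immediately from Creignou--Hermann (Theorem~\ref{thm:CH}) applied to $\nCSP(\{R_g\})$, which reduces to $\nCSP(\{f,g\})$. In the second sub-case, $R_g$ is affine but $g$ takes at least two distinct positive values; pinning all but one argument of $g$ to an appropriate configuration then extracts a unary $u$ with $u(0)\neq u(1)$, both positive. This unary supplies exactly the weight asymmetry that pure-affine functions cannot provide.

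With $f'$ (binary, not of product type) and $u$ (unary, distinct positive values) in hand, the last step is to assemble a symmetric binary $h$ falling into the hard case~(1) of Theorem~\ref{thm:bulgro}. A natural construction is $h(x,y)=u(x)u(y)f'(x,y)f'(y,x)$, which is automatically symmetric; the non-product-type structure of $f'$ together with the non-trivial ratio $u(0)/u(1)$ forces the associated matrix $H$ to have a nonzero off-diagonal entry, at least one nonzero diagonal entry, and to break the rank-one identity. Should this construction degenerate for a particular $f'$, the fallback contraction $h(x,y)=\sum_{z}f'(x,z)f'(y,z)u(z)$ still yields a symmetric matrix with the required structure. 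In either case, Theorem~\ref{thm:bulgro}(1) gives $\nP$-hardness of $\nCSP(\{h\})$, and hence of $\nCSP(\{f,g\})$, which upgrades to $\FPnP$-completeness via the observation in Section~1.3.
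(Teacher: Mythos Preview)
Your step~2 has a genuine gap: it is \emph{not} true that a function $f$ which fails to be of product type must have some pinned binary restriction that also fails to be of product type. The canonical counterexample is $f=\oplus_3$, the indicator of $x_1\oplus x_2\oplus x_3=1$. Its support $\{001,010,100,111\}$ cannot be cut out by unary factors together with $\chi_=$ or $\chi_{\neq}$, so $\oplus_3$ is not of product type; yet pinning any one coordinate to~$0$ yields $\chi_{\neq}$ and pinning to~$1$ yields $\chi_=$, both of product type. Projecting out a coordinate gives the constant~$1$ function, and identifying two coordinates gives a unary. Thus your proposed inductive ``gluing'' argument---that if every two-variable restriction is of product type then $f$ itself is---is refuted precisely by $\oplus_3$ (and more generally by $\oplus_k$, $\neg\oplus_k$ for $k\ge3$).

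This is not a patchable detail. When $f$ is essentially a parity of arity~$\ge3$, no gadget built from $f$ and pinning alone produces a $2\times2$ matrix in the hard region of Theorem~\ref{thm:bulgro}: everything you manufacture stays pure affine. The paper's proof handles exactly this obstruction by a different mechanism. From such an $f$ it extracts (Lemma~\ref{two}, Case~2) a clean copy of $\oplus_3$ or $\neg\oplus_3$, and then uses the nontrivial unary $U_\lambda$ coming from~$g$ to reduce from the weight enumerator of a linear code (Lemma~\ref{one}), which is \nP-hard for $\lambda\notin\{0,1\}$; an interpolation step (Lemma~\ref{five}) eliminates an auxiliary constant, and Lemma~\ref{six} packages the extraction of $U_\lambda$ from~$g$. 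The remaining cases (non-affine support, or affine support with the reduced relation complete) are dispatched by Lemma~\ref{lem:CH} and Lemma~\ref{hard} respectively. As a minor secondary point, your extraction of a nontrivial unary from~$g$ by ``pinning all but one argument'' also need not succeed as stated: for $g(x,y)=\chi_=(x,y)\cdot(1+x)$ every one-variable pinning has a zero value; the paper uses a bisection argument followed by \emph{projection} onto a single column (Lemma~\ref{six}).
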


Note that the functions $f$ and $g$ in Lemma~\ref{seven} may be one
and the same function. So $\nCSP(\{f\})$ is \nP-hard when $f$ is not
of product type nor pure affine. The rest of this article
gives the proof of Lemma~\ref{seven}.

\section{Useful tools for proving hardness of \nCSP}

\subsection{Notation}

For any sequence $u_1,\ldots,u_k$ of variables of~$I$ and
any  sequence $c_1,\ldots,c_k$ of elements of the domain~$[q]$,
we will let
$Z(I \mid \sigma(u_1)=c_1,\ldots,\sigma(u_k)=c_k)$ denote the contribution to~$Z(I)$
from assignments~$\sigma$
with $\sigma(u_1)=c_1,\cdots,\sigma(u_k)=c_k$.

\subsection{Projection}
\label{sec:project}

The first tool that we study is projection,
which is referred
to as ``integrating out'' in the statistical physics literature.

Let $f$ be a function of arity~$k$, and let
$J=\{j_1,\ldots,j_r\}$ be a size-$r$ subset of $\{1,\ldots,k\}$,
where $j_1<\cdots<j_r$.\footnote{It is not necessary to choose this particular ordering for $J$, 
but it is convenient to do so.}
We say that a $k$-tuple $x'\in [q]^k$ {\it extends\/} an $r$-tuple $x\in[q]^r$ on~$J$
(written $x'\sqsupseteq_J x$)
if $x'$ agrees with~$x$ on indices in~$J$;
that is to say, $x'_{j_i}=x_i$ for all
$1\leq i \leq r$.
The {\it projection\/} $g$ of $f$ onto $J$
is defined as follows. For every $x\in[q]^r$,
$g(x) = \sum_{x'\sqsupseteq_J x} f(x')$.

The following lemma
may be viewed as a weighted version of
Proposition~2 of~\cite{BD}, where it is proved for the unweighted case. It is expressed somewhat differently in~\cite{BD}, in terms of counting
the number of solutions to an existential formula.
\begin{lemma}
\label{lem:project}
Suppose $\mathcal{F}\subseteq\mathcal{F}_q$.
Let $g$ be a projection
of a function $f\in\mathcal{F}$ onto a subset of its indices.
Then $\nCSP(\mathcal{F}\cup\{g\}) \redT \nCSP(\mathcal{F})$.
\end{lemma}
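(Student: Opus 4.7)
The plan is to show that any instance $I$ of $\nCSP(\mathcal{F}\cup\{g\})$ can be rewritten in polynomial time as an instance $I'$ of $\nCSP(\mathcal{F})$ with $Z(I')=Z(I)$; this yields the desired Turing reduction (in fact a parsimonious one, with $\varphi(I)=1$ in the language of simulation). The idea is the standard device of introducing auxiliary variables whose role is to realise, via summation, the defining identity of the projection $g(x)=\sum_{x'\sqsupseteq_J x} f(x')$.

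Concretely, for each constraint $C$ of~$I$ whose function is $g$, with scope $(v_1,\ldots,v_r)$, I would introduce $k-r$ fresh variables $u^C_i$, one for each index $i\in\{1,\ldots,k\}\setminus J$, all distinct and not occurring elsewhere in the instance. Then replace $C$ by a new constraint with function~$f$ whose scope is the length-$k$ sequence $(w^C_1,\ldots,w^C_k)$ obtained by placing $v_1,\ldots,v_r$ at positions $j_1<\cdots<j_r$ and the fresh variables at the remaining positions. Leave every other constraint of $I$ unchanged; call the resulting instance~$I'$. Since $f\in\mathcal{F}$, the instance $I'$ lies in $\nCSP(\mathcal{F})$, and clearly it can be constructed from~$I$ in polynomial time.

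For correctness, observe that each fresh variable $u^C_i$ appears in exactly one constraint of~$I'$. Hence, writing $\tau$ for the restriction of a configuration of~$I'$ to the original variable set~$V$, the sum defining $Z(I')$ factors as a sum over $\tau$, times, for each replaced constraint~$C$, an independent inner sum over the fresh variables $u^C_i$. That inner sum equals
\[
\sum_{x'\sqsupseteq_J (\tau(v_1),\ldots,\tau(v_r))} f(x') \;=\; g(\tau(v_1),\ldots,\tau(v_r)),
\]
which is exactly the contribution that the original constraint $C$ (with function~$g$) would have made to $w(\tau)$ in~$I$. Multiplying these identities over all replaced constraints, the integrand over $\tau$ matches the weight $w(\tau)$ in~$I$, so $Z(I')=Z(I)$.

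There is no real obstacle here; the only minor points to watch are (i) that the scope $(v_1,\ldots,v_r)$ of a $g$-constraint may contain repeated variables, which causes no difficulty since we only insert fresh variables at the positions outside~$J$, and (ii) that the sets of fresh variables for different $g$-constraints must be disjoint, so that the factorisation of the sum across constraints is valid. Both are handled by the per-constraint indexing $u^C_i$ above.
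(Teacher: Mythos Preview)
Your proposal is correct and follows essentially the same approach as the paper: replace each $g$-constraint by an $f$-constraint whose scope keeps the original variables at the positions in~$J$ and inserts distinct fresh variables at the remaining positions, yielding $Z(I')=Z(I)$. If anything, you spell out the factorisation justifying $Z(I')=Z(I)$ more explicitly than the paper does.
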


\begin{proof}
Let $k$ be the arity of $f$ and let $g$ be the projection of~$f$ onto
the subset~$J$ of its indices.
Let $I$ be an instance of $\nCSP(\mathcal{F}\cup\{g\})$.
We will construct an instance $I'$ of $\nCSP(\mathcal{F})$ such that
$Z(I)=Z(I')$.
The instance
$I'$ is identical to~$I$ except that every constraint~$C$ of~$I$
involving~$g$ is replaced with a new constraint~$C'$ of~$I'$ involving~$f$.
The corresponding scope $(v_{C',1},\ldots,v_{C',k})$
is constructed as follows. If
$j_{\ell}$ is the $\ell$'th element of~$J$,
then $v'_{C',j_{\ell}} = v_{C,\ell}$. The other variables, $v_{C',j}$ ($j\notin J$), are
distinct  new variables. We have shown that $\mathcal{F}$ simulates $g$
with $\phi(I)=1$.
\end{proof}

\subsection{Pinning}

For $c\in[q]$, $\delta_c$ denotes the unary function with
$\delta_c(c)=1$ and $\delta_c(d)=0$ for $d\neq c$.
The following lemma, which allows ``pinning'' CSP variables to specific values in
hardness proofs, generalises Theorem~8 of~\cite{BD},
which does the unweighted case.  Again~\cite{BD} employs different
terminology, and its theorem is a statement about the full idempotent
reduct of a finite algebra. The idea of pinning was used
previously by Bulatov and Grohe~of \cite{BG05} in the context of
counting weighted graph homomorphisms (see Lemma~32 of~\cite{BG05}).
A similar idea was used by Dyer and Greenhill in the
context of counting \emph{unweighted} graph homomorphisms --- in
that context, Theorem~4.1 of~\cite{DG} allows pinning all variables
to a particular \emph{component} of the target graph~$H$.

\begin{lemma}
\label{lem:pinning}
For every $\mathcal{F}\subseteq \mathcal{F}_q$,
$\nCSP(\mathcal{F}\cup \bigcup_{c\in[q]}\delta_c) \redT \nCSP(\mathcal{F})$.
\end{lemma}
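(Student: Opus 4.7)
The plan is to decompose the reduction into $q$ one-value pinning steps: it suffices to establish $\nCSP(\mathcal{F}' \cup \{\delta_c\}) \redT \nCSP(\mathcal{F}')$ for every $\mathcal{F}' \subseteq \mathcal{F}_q$ and every $c \in [q]$, and then chain these reductions over $c = 0, 1, \ldots, q-1$. Fix $c$ and $\mathcal{F}'$, and consider an instance $I$ of $\nCSP(\mathcal{F}' \cup \{\delta_c\})$ whose $\delta_c$-pins fall on variables $u_1, \ldots, u_m$; write $I^-$ for $I$ with those pins removed. Since every satisfying configuration of $I$ assigns $u_1, \ldots, u_m$ the common value $c$, the first move is to \emph{identify} these variables: let $I^*$ be $I^-$ with every occurrence of $u_1, \ldots, u_m$ replaced by a single fresh variable $u^*$. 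Then $I^* \in \nCSP(\mathcal{F}')$, and
\[
Z(I^*) \;=\; \sum_{d \in [q]} N_d, \qquad N_d \;:=\; Z\bigl(I^- \mid \sigma(u_1) = \cdots = \sigma(u_m) = d\bigr),
\]
while the quantity to be computed is $Z(I) = N_c$.

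To separate the $q$ unknowns $N_d$, I would use a parallel-copy interpolation in the spirit of Bulatov and Grohe~\cite{BG05}. For each $n \geq 1$, form $I^*_n$ by taking $n$ disjoint copies of $I^*$ and identifying all copies of $u^*$ into one shared variable; $I^*_n$ is still an $\nCSP(\mathcal{F}')$ instance, and conditioning on the value of $\sigma(u^*)$ gives
\[
Z(I^*_n) \;=\; \sum_{d \in [q]} N_d^n.
\]
Oracle queries for $n = 1, \ldots, q$ yield $q$ power-sum equations, from which Newton's identities recover the multiset $\{N_0, \ldots, N_{q-1}\}$. To pinpoint $N_c$ within that multiset, I would additionally attach \emph{tagging} sub-instances: any $J \in \nCSP(\mathcal{F}')$ with a distinguished variable identified with $u^*$ imposes a unary weighting $d \mapsto \bar{J}(d)$ on $u^*$, and $q$ linearly independent such weightings give a non-singular linear system that pinpoints each $N_d$ individually.

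The main obstacle is the degenerate case where no family of tagging gadgets separates all values --- equivalently, where the automorphism group $G \leq S_q$ of $\mathcal{F}'$ acts non-trivially and every gadget-induced weighting is forced to be $G$-invariant. In that case, however, $N_d$ is itself constant on $G$-orbits, so $N_c$ equals the orbit average $|G \cdot c|^{-1} \sum_{d \in G \cdot c} N_d$; the required orbit sum is recoverable from the power-sum queries combined with a $G$-invariant orbit-indicator gadget, which does lie in the span of $\mathcal{F}'$-gadget weightings (the fully degenerate ``trivial $\mathcal{F}'$'' case, where even orbit indicators are not realizable, is handled separately since $G$ must then act transitively on $[q]$ and the partition function of every instance is a fixed power of $q$). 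Because $q$ is constant (equal to $2$ in our Boolean application), each iteration costs a constant number of polynomial-size oracle calls, and the $q$ chained iterations deliver the desired polynomial-time Turing reduction.
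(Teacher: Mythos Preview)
Your overall strategy---pin one value at a time, identify all $\delta_c$-pinned variables into a single $u^*$, and recover $N_c=Z(I^-\mid\sigma(u^*)=c)$ from oracle calls---is genuinely different from the paper's.  The paper pins all $q$ values simultaneously by introducing terminals $t_0,\ldots,t_{q-1}$ together with a single $q$-ary disequality constraint; the disequality is then eliminated by M\"obius inversion over the partition lattice of~$[q]$ (Lemma~\ref{lem:second}), and in the non-symmetric case the identity-permutation contribution is isolated by an explicit interpolation that attaches, for each non-trivial permutation class~$\pi_i$, copies of a witnessing constraint $f_i$ on a scope $s^i=(t_{x^i_1},\ldots,t_{x^i_k})$ with $f_i(x^i)>f_i(\pi_i(x^i))$ (Lemma~\ref{lem:first}, Case~2).

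The gap in your argument is the clause ``a $G$-invariant orbit-indicator gadget, which does lie in the span of $\mathcal F'$-gadget weightings.''  You are asserting that the linear span of the unary weights $d\mapsto\bar J(d)$ realisable by $\nCSP(\mathcal F')$-gadgets coincides with the space of $G$-invariant functions on~$[q]$, where $G=\mathrm{Aut}(\mathcal F')$.  One inclusion is immediate, but the other---that any two domain elements in distinct $G$-orbits can be separated by some gadget---is a weighted Lov\'asz-type statement that is not obvious and that you do not prove.  This is exactly the content the paper supplies through its explicit witness-and-interpolate construction; without it, neither your ``$q$ linearly independent tagging weightings'' in the generic case nor your ``orbit-indicator gadget'' in the degenerate case is justified.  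Two smaller points: once you actually have enough independent tagging weightings the power-sum/Newton step is redundant, since the linear system alone determines every $N_d$; and your parenthetical about the ``fully degenerate'' case is confused---when $G$ acts transitively there is a single orbit, its indicator is the constant function (which \emph{is} realisable as the empty gadget), and the correct conclusion is simply $N_c=Z(I^*)/q$, not that partition functions are powers of~$q$.
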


The proof of Lemma~\ref{lem:pinning} is deferred to the appendix.
Since we only use the case $q=2$ in this paper, we provide the (simpler)
proof for the Boolean case here.

\begin{lemma}
\label{lem:2pinning}
For every $\mathcal{F}\subseteq \mathcal{F}_2$,
$\nCSP(\mathcal{F}\cup \{\delta_0,\delta_1\}) \redT \nCSP(\mathcal{F})$.
\end{lemma}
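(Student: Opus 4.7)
My plan is to reduce computing $Z(I)$ to several ordinary partition-function evaluations of $\nCSP(\mathcal{F})$-instances, splitting into two cases according to whether $\mathcal{F}$ has a global $0\leftrightarrow 1$ symmetry. Given the input instance $I$, let $V_0$ and $V_1$ be the sets of variables pinned to $0$ and to $1$; if $V_0\cap V_1\neq\emptyset$ then $Z(I)=0$, so assume otherwise. I would identify all variables in $V_0$ into a single new variable $u$ and all variables in $V_1$ into a single variable $w$ (legal, since scopes may repeat variables), delete the pinning constraints, and call the resulting instance $I_0\in\nCSP(\mathcal{F})$. Writing $P_{cd}:=Z(I_0\mid \sigma(u)=c,\sigma(w)=d)$ for $(c,d)\in\{0,1\}^2$, we have $Z(I)=P_{01}$. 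Degenerate cases in which $V_0$ or $V_1$ is empty are handled analogously with one distinguished variable.

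\emph{Case 1: every $f\in\mathcal{F}$ satisfies $f(\mathbf{x})=f(\mathbf{1}-\mathbf{x})$.} Then $w(\sigma)=w(\mathbf{1}-\sigma)$ for every configuration of any $\nCSP(\mathcal{F})$-instance, so $P_{00}=P_{11}$ and $P_{01}=P_{10}$. Letting $I_0^{\mathrm{id}}$ denote $I_0$ with $u$ and $w$ identified into a single variable, the oracle yields $Z(I_0)=2P_{00}+2P_{01}$ and $Z(I_0^{\mathrm{id}})=P_{00}+P_{11}=2P_{00}$, whence $P_{01}=\bigl(Z(I_0)-Z(I_0^{\mathrm{id}})\bigr)/2$.

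\emph{Case 2: some $f\in\mathcal{F}$ of arity $k$ has $f(\mathbf{x})\neq f(\mathbf{1}-\mathbf{x})$ for some $\mathbf{x}\in\{0,1\}^k$.} I would first extract from $f$ a unary $h$ with $h(0)\neq h(1)$ that is simulated by $\mathcal{F}$. If $\mathbf{x}\in\{\mathbf{0},\mathbf{1}\}$, take $h(y):=f(y,\ldots,y)$, simulated by a single $f$-constraint with all arguments equal. Otherwise, form the binary fold $f_\pi(y_0,y_1):=f(y_{x_1},\ldots,y_{x_k})$, simulated by $f$ with arguments identified according to $\mathbf{x}$; since $f_\pi(0,1)=f(\mathbf{x})\neq f(\mathbf{1}-\mathbf{x})=f_\pi(1,0)$, at least one of the two single-coordinate projections of $f_\pi$ must be non-constant (if both were constant, subtracting the relations yields $2(f_\pi(0,1)-f_\pi(1,0))=0$). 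Take $h$ to be that projection, available via Lemma~\ref{lem:project}. Chaining reductions gives $\nCSP(\mathcal{F}\cup\{h\})\redT\nCSP(\mathcal{F})$.

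Now set $\alpha:=h(0)$ and $\beta:=h(1)$. For each $(j,\ell)\in\{0,1\}^2$, let $I_{j,\ell}$ be the instance obtained from $I_0$ by appending $j$ copies of the unary constraint $h(u)$ and $\ell$ copies of $h(w)$. A direct computation gives
\[
  Z(I_{j,\ell}) \;=\; \alpha^{j+\ell}P_{00} + \alpha^j\beta^\ell P_{01} + \beta^j\alpha^\ell P_{10} + \beta^{j+\ell}P_{11}.
\]
The $4\times 4$ coefficient matrix indexed by $(j,\ell)$ and $(c,d)\in\{0,1\}^2$ is the Kronecker square of $\left(\begin{smallmatrix}1&1\\\alpha&\beta\end{smallmatrix}\right)$ and hence has determinant $(\beta-\alpha)^4\neq 0$. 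Four oracle calls and solving the resulting linear system recover $P_{01}$. The main obstacle is the Case~2 construction of $h$: one has to check that non-$0\leftrightarrow1$-symmetry of $f$ always survives the fold-and-project manipulation inside $\nCSP(\mathcal{F})$, which rests on the elementary observation that a binary $g$ with $g(0,1)\neq g(1,0)$ cannot have both column projections constant.
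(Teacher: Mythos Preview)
Your proof is correct and follows the same overall strategy as the paper: merge the pinned variables into two distinguished variables, split on whether $\mathcal{F}$ is invariant under bit-complementation, and in each case recover $P_{01}$ from a few oracle values by solving a linear system. Case~1 is identical to the paper's argument (your $I_0$ and $I_0^{\mathrm{id}}$ are exactly the paper's $I'$ and $I''$). In Case~2 the paper takes a shorter path: rather than first extracting a unary $h$ (which costs you the fold $f_\pi$, an appeal to Lemma~\ref{lem:project}, and the observation that a binary function with $f_\pi(0,1)\neq f_\pi(1,0)$ cannot have both one-variable projections constant), the paper simply adjoins to $I'$ a single extra $f$-constraint with scope $(t_{x_1},\ldots,t_{x_k})$ and to $I''$ the corresponding $f$-constraint with scope $(t,\ldots,t)$. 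Subtracting $Z(I''_x)$ from $Z(I'_x)$ again kills the $P_{00}$ and $P_{11}$ contributions, leaving the $2\times 2$ system
\[
\begin{pmatrix} Z(I')-Z(I'')\\ Z(I'_x)-Z(I''_x)\end{pmatrix}
=\begin{pmatrix} 1 & 1\\ f(x) & f(\overline{x})\end{pmatrix}
\begin{pmatrix} P_{01}\\ P_{10}\end{pmatrix},
\]
which is nonsingular since $f(x)\neq f(\overline{x})$. So the paper avoids both the projection lemma and your $4\times 4$ system. Your route is a bit more modular in that it packages the asymmetry into a reusable unary gadget, but the paper's direct use of $f$ with the scope built from $t_0,t_1$ is cleaner for this particular lemma.
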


\begin{proof}

For
$x\in[2]^k$, let $\overline{x}$ be the $k$-tuple
whose $i$'th component, $\overline{x}_i$, is $x_i \xor 1$, for all~$i$.
Say that $\mathcal{F}$ is \emph{symmetric}
if it is the case that for every arity-$k$ function $f\in\mathcal{F}$
and every $x\in[2]^k$, $f(\overline{x})=f(x)$.

Given an instance $I$ of $\nCSP(\mathcal{F}\cup \{\delta_0,\delta_1\})$
with variable set~$V$
we consider two instances $I'$ and $I''$ of
$\nCSP(\mathcal{F})$.
Let $V_0$ be the set of variables $v$ of~$I$ to which the constraint
$\delta_0(v)$ is applied.
Let $V_1$ be the set of variables $v$ of~$I$ to which the constraint
$\delta_1(v)$ is applied.
We can assume without loss of generality that $V_0$ and $V_1$ do not
intersect. (Otherwise, $Z(I)=0$ and we can determine this without
using an oracle for $\nCSP(\mathcal{F})$.)
Let $V_2=V\setminus(V_0\cup V_1)$.
The instance $I'$ has variables $V_2 \cup \{t_0,t_1\}$
where $t_0$ and $t_1$ are distinct new variables that are not in~$V$.
Every constraint~$C$ of~$I$
involving a function $f\in \mathcal{F}$
corresponds to a constraint~$C'$ of~$I'$.
$C'$ is the same as~$C$ except that variables in~$V_0$ are replaced with~$t_0$
and variables in~$V_1$ are replaced with~$t_1$.
Similarly, the instance $I''$ has variables $V_2 \cup \{t\}$ where $t$ is
a new variable that is not in~$V$.
Every constraint~$C$ of~$I$
involving a function $f\in \mathcal{F}$
corresponds to a constraint~$C''$ of~$I''$.
The constraint~$C''$ is the same
as~$C$ except that variables in~$V_0\cup V_1$ are replaced with~$t$.

\noindent{\bf Case 1.
$\mathcal{F}$ is symmetric:\quad}
By construction,\vspace{-1ex}
\[Z(I')-Z(I'') = Z(I' \mid \sigma(t_0)=0,\sigma(t_1)=1) + Z(I'\mid \sigma(t_0)=1,\sigma(t_1)=0).\vspace{-1ex} \]
By symmetry, the summands are the same, so\vspace{-1ex}
\[Z(I')-Z(I'') = 2 Z(I' \mid \sigma(t_0)=0, \sigma(t_1)=1) =
2 Z(I).\]

\noindent{\bf Case 2.
$\mathcal{F}$ is not symmetric:\quad}
Let $f$ be an arity-$k$ function in~$\mathcal{F}$ and
let $x\in[2]^k$ so that
$f(x)> f(\overline{x})\geq 0$.
Let $s=(t_{x_1},\ldots,t_{x_k})$ and
let $I'_x$ be the instance derived from $I'$ by adding
a new constraint with function~$f$ and
scope~$s$.
Similarly, let $I''_x$ be the instance derived from~$I''$ by
adding a new constraint with function~$f$
and scope $(t,\ldots,t)$. Now
\begin{align*}
Z(I'_x) &=
Z(I' \mid \sigma(t_0)=0, \sigma(t_1)=1)f(x) +
Z(I'\mid \sigma(t_0)=1, \sigma(t_1)=0)f(\overline{x}) \\ &\quad +
Z(I'\mid \sigma(t_0)=0, \sigma(t_1)=0) f(0,\ldots,0) +
Z(I'\mid \sigma(t_0)=1, \sigma(t_1)=1) f(1,\ldots,1)
\\
&=
Z(I' \mid \sigma(t_0)=0, \sigma(t_1)=1)f(x) +
Z(I'\mid \sigma(t_0)=1, \sigma(t_1)=0)f(\overline{x})  +
Z(I''_x).
\end{align*}
Thus we have two independent equations,
\begin{align*}
    Z(I'_x)-Z(I''_x)\ &=\ Z(I' \mid \sigma(t_0)=0, \sigma(t_1)=1)f(x) +
Z(I'\mid \sigma(t_0)=1, \sigma(t_1)=0)f(\overline{x}), \\
    Z(I')-Z(I'')\ &=\ Z(I' \mid \sigma(t_0)=0,\sigma(t_1)=1)\phantom{f(x)}
    + Z(I'\mid \sigma(t_0)=1,\sigma(t_1)=0)\phantom{f(\overline{x})},
\end{align*}
in the unknowns
$Z(I' \mid \sigma(t_0)=0, \sigma(t_1)=1)$ and
$Z(I'\mid \sigma(t_0)=1, \sigma(t_1)=0)$.
Solving these, we obtain the value of
$Z(I' \mid \sigma(t_0)=0, \sigma(t_1)=1) = Z(I)$.
\end{proof}

\subsection{\nP-hard problems}

To prove Lemma~\ref{seven}, we will give reductions from some known
\nP-hard problems. The first of these is the problem of counting
homomorphisms from simple graphs to $2$-vertex multigraphs. We use the following
special case of Bulatov and Grohe's Theorem~\ref{thm:bulgro}.

\begin{corollary}[Bulatov and Grohe~\cite{BG05}]\label{bulgro}
Let $H$ be a symmetric $2\times 2$ matrix with non-negative real
entries.  If $H$ has rank~2 and at most one entry of~$H$ is~$0$
then \eval($H$) is \nP-hard.
\end{corollary}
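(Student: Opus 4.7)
The plan is to derive the corollary directly from Theorem~\ref{thm:bulgro} by verifying that $H$ falls into case~(1) of that theorem, namely that the associated weighted graph $\mathcal{H}$ on the two vertices $\{0,1\}$ is connected and not bipartite.

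Write $H=\begin{pmatrix}a&b\\b&c\end{pmatrix}$ with $a,b,c\ge 0$. First, I would verify connectedness. The off-diagonal position contributes \emph{two} entries $H_{01}=H_{10}=b$ to the count of zeros in $H$; so the hypothesis that at most one entry of $H$ is zero forces $b>0$. Hence there is an edge of positive weight between vertices $0$ and $1$ in $\mathcal{H}$, so $\mathcal{H}$ is connected.

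Next, I would verify that $\mathcal{H}$ is not bipartite. Since at most one of the four matrix entries is zero, in particular not both diagonal entries vanish, so at least one of $a,c$ is strictly positive. This yields a self-loop at the corresponding vertex of~$\mathcal{H}$, and a graph with a self-loop cannot be bipartite.

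Having established that $\mathcal{H}$ is connected and not bipartite, case~(1) of Theorem~\ref{thm:bulgro} applies: computing $Z_H$ is in \FP\ when $\mathrm{rank}(H)\le 1$ and \nP-hard otherwise. Since we are given $\mathrm{rank}(H)=2$, the latter alternative holds, and \eval($H$) is \nP-hard. There is no real obstacle here; the proof is simply a matter of reading off the appropriate clause of the Bulatov--Grohe dichotomy after checking that the connectedness/bipartiteness conditions are as claimed.
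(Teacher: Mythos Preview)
Your argument is correct and is exactly the natural verification that the hypotheses of Theorem~\ref{thm:bulgro}(1) are met; the paper itself provides no proof for this corollary, merely citing it as an immediate specialisation of Theorem~\ref{thm:bulgro}. Your reading of ``at most one entry'' as counting the four matrix positions (so that $b=0$ would already account for two zero entries) is the intended one and is consistent with how the corollary is later applied in Lemmas~\ref{WE}, \ref{lem:CH} and~\ref{four}.
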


We will also use the problem of computing the \emph{weight
enumerator} of a linear code. Given a \emph{generating matrix}
$A\in\{0,1\}^{r\times C}$ of rank $r$, a \emph{code word}~$c$ is any
vector in the linear subspace~$\Upsilon$ generated by the rows of~$A$
over GF(2). For any real number~$\lambda$, the \emph{weight
enumerator} of the code is given by $W_A(\lambda)=\sum_{c\in
\Upsilon}\lambda^{\|c\|}$, where $\|c\|$ is the number of $1$'s
in~$c$. The problem of computing the weight enumerator of
a linear code is in \FP for $\lambda\in\{-1,0,1\}$,
and is known to be \nP-hard for every other fixed
$\lambda\in\Rationals$ (see \cite{welsh}).
We could not find a proof, so we
provide one here.  We restrict attention to positive~$\lambda$,
since that is adequate for our purposes.
\begin{lemma}
Computing the Weight Enumerator of a Linear Code is \nP-hard
for any fixed positive rational number $\lambda\neq 1$.
\label{WE}
\end{lemma}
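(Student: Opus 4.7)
The plan is to reduce the \nP-hard graph homomorphism problem $Z_H$ of Corollary~\ref{bulgro}, with the specific choice
\[ H = \begin{pmatrix} 1 & \lambda \\ \lambda & 1 \end{pmatrix}, \]
to computing $W_A(\lambda)$. Because $\lambda>0$ and $\lambda\neq 1$ we have $\det H = 1-\lambda^2\neq 0$, so $H$ has rank~$2$; and $H$ has no zero entry. Corollary~\ref{bulgro} therefore gives that $Z_H$ is \nP-hard, and since $Z_H$ factorises over connected components, the hardness still holds when the input graph $G$ is required to be connected.

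Given a connected simple graph $G=(V,E)$ with $|V|=r$, I would form the $r\times|E|$ vertex--edge incidence matrix $A\in\{0,1\}^{r\times|E|}$ over GF($2$) and obtain $A'$ by deleting any one row. Each column of $A$ has exactly two $1$'s, so the sum of all rows vanishes; by connectedness this is the only GF($2$)-dependence among the rows, so $A$ has GF($2$)-rank $r-1$. Hence $A'$ has full row rank $r-1$ and its row space coincides with that of $A$, making $A'$ a legitimate generating matrix for the same code $\Upsilon$.

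To connect the two problems, identify $x\in\{0,1\}^r$ with a colouring $\sigma\colon V\to\{0,1\}$. The $j$-th coordinate of $xA$ is $\sigma(u)\xor\sigma(v)$, where $\{u,v\}$ is the $j$-th edge, so
\[ \sum_{x\in\{0,1\}^r}\lambda^{\|xA\|}
  \;=\; \sum_{\sigma\colon V\to\{0,1\}}\prod_{\{u,v\}\in E}\lambda^{\sigma(u)\xor\sigma(v)}
  \;=\; Z_H(G). \]
The kernel of $x\mapsto xA$ is the set of $x$ that are constant on each connected component, which by connectedness is $\{\mathbf{0},\mathbf{1}\}$; hence each codeword of $\Upsilon$ has exactly two preimages, and the left-hand side equals $2W_A(\lambda)=2W_{A'}(\lambda)$. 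Therefore $W_{A'}(\lambda)=Z_H(G)/2$, and a polynomial-time algorithm for the weight enumerator at $\lambda$ would solve $Z_H$, yielding the desired contradiction.

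No step is a real obstacle: the only things to verify are that (i)~$H$ satisfies the rank-and-zero hypothesis of Corollary~\ref{bulgro} precisely when $\lambda\neq 1$, (ii)~the incidence matrix of a connected graph has GF($2$)-rank $r-1$, and (iii)~$x\mapsto xA$ is exactly $2$-to-$1$ in the connected case. Finding the right matrix $H$ so that the binary function $\lambda^{x\xor y}$ falls straight out of the Bulatov--Grohe dichotomy is the one insight needed.
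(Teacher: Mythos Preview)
Your argument is correct and is essentially the paper's own proof: both choose $H=\bigl(\begin{smallmatrix}1&\lambda\\ \lambda&1\end{smallmatrix}\bigr)$, form the vertex--edge incidence matrix of a connected graph, delete one row to obtain a full-rank generating matrix, and arrive at $W_{A'}(\lambda)=\tfrac12 Z_H(G)$. Your write-up is in fact a touch more careful than the paper's (you justify restricting to connected $G$ via factorisation over components, and you make the $2$-to-$1$ kernel argument explicit), but the route is the same.
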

\begin{proof}
We will prove hardness by reduction from a problem~$\text{\eval}(H)$,
for some appropriate~$H$, using
Corollary~\ref{bulgro}. Let the input to $\text{\eval}(H)$ be a connected
graph $G=(V,E)$ with $V=\{v_1,\ldots,v_n\}$ and
$E=\{e_1,\ldots,e_m\}$. Let $B$ be the $n\times m$ incidence matrix
of~$G$, with $b_{ij}=1$ if $v_i\in e_j$ and $b_{ij}=0$ otherwise.
Let $A$ be the $(n-1)\times m$ matrix which is $B$ with the row for
$v_n$ deleted. $A$ will be the generating matrix of the Weight
Enumerator instance, with $r=n-1$ and $C=m$. It has rank $(n-1)$ since $G$ contains a
spanning tree. A code word $c$ has $c_j=\bigoplus_{i\in U} b_{ij}$,
where $U\subseteq V\setminus\{v_{n}\}$. Thus $c_j=1$ if and
only if $e_j$ has exactly one endpoint in~$U$, and the weight of~$c$
is $\lambda^{k}$, where $k$ is the number of edges in the cut
$U,V\setminus U$. Thus $W_A(\lambda) = \frac12 Z_H(G)$, where $H$ is
the symmetric weight matrix with $H_{11}=H_{22}=1$ and
$H_{12}=H_{21}=\lambda$. The $\frac12$ arises because we fixed which
side of the cut contains $v_n$. Now $H$ has rank 2 unless
$\lambda=1$, so this problem is \nP-hard by Corollary~\ref{bulgro}.
Note, by the way, that $Z_H(G)$ is the partition function of the
Ising model in statistical physics~\cite{Cip87}.
\end{proof}

\section{The Proof of Lemma~\ref{seven}}

Throughout this section, we assume $q=2$.
The following Lemma is
a generalisation of a result of
Creignou and Hermann~\cite{CH}, which deals with the case
in which $f$ is a relation
(or, in our setting, a function with range $\{0,1\}$).
The inductive technique used in the proof of Lemma~\ref{lem:CH}
(combined with the follow-up in Lemma~\ref{four})
is
good for showing that $\nCSP(\mathcal{F})$ is \nP-hard when $\mathcal{F}$
contains a \emph{single} function.
A very different situation arises when $\nCSP(\{f\})$ and $\nCSP(\{g\})$
are in \FP but $\nCSP(\{f,g\})$ is \nP-hard due to \emph{interactions} between~$f$
and~$g$ --- we deal with that problem later.

\begin{lemma}
\label{lem:CH}
Suppose that $f\in \mathcal{F}_2$  does not have affine support.
Then $\nCSP(\{f\})$ is \nP-hard.
\end{lemma}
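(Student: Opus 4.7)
My plan is to prove the lemma by induction on the arity $k$ of $f$.  As a preliminary move I invoke Lemma~\ref{lem:2pinning}, which reduces the task to showing that $\nCSP(\{f,\delta_0,\delta_1\})$ is \nP-hard.  The unary functions $\delta_0,\delta_1$ then allow me to pin any coordinate of $f$ to any Boolean value, so I may freely replace $f$ by any of its $(k-1)$-ary coordinate restrictions.  Note that the $k=1$ case is vacuous, since every subset of $\{0,1\}$ is affine.

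For the base case $k=2$, the support $R_f$ is a 3-element subset of $\{0,1\}^{2}$, so the $2\times 2$ weight matrix $F$ of $f$ has exactly one zero entry.  I would form the symmetric binary gadget
\[
   g(u,v)\;:=\;\sum_{x\in\{0,1\}} f(u,x)\,f(v,x),
\]
which is simulated by $f$ by introducing one fresh variable per $g$-constraint and attaching two $f$-constraints to it.  The matrix of $g$ is $G=FF^{T}$; a short calculation, using that $F$ has only one zero, shows that every entry of $G$ is strictly positive and $\det G=(\det F)^{2}>0$.  Hence $G$ is a symmetric $2\times 2$ rank-$2$ matrix with no zero entries, and Corollary~\ref{bulgro} makes $\eval(G)$, and therefore $\nCSP(\{g\})$, \nP-hard; the simulation transfers this hardness to $\nCSP(\{f\})$.

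For the inductive step $k\ge 3$, fix a witness $a,b,c\in R_f$ with $d:=a\oplus b\oplus c\notin R_f$ and set $S:=\{i\in[k]:a_i=b_i=c_i\}$.  If $S\neq\emptyset$, pick any $i\in S$ and pin coordinate $i$ to the common value $a_i$: the resulting arity-$(k-1)$ function contains $a|_T,b|_T,c|_T$ in its support (with $T=[k]\setminus\{i\}$) but misses their XOR $d|_T$, so the induction hypothesis finishes the proof.  If $S=\emptyset$, the plan is to show that either some projection of $f$ onto a $(k-1)$-subset of its coordinates has non-affine support, in which case Lemma~\ref{lem:project} plus induction finishes, or some single-coordinate pinning produces a non-affine arity-$(k-1)$ restriction.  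The linking observation: writing $e_i$ for the $k$-tuple with a single $1$ in position $i$, if every $(k-1)$-projection is affine then the XOR-closure of the projection at the witness forces $d\oplus e_i\in R_f$ for every $i\in[k]$.  Combined with $|S|=0$ and the assumption that every slice is also affine, further slice-closure arguments propagate the bit-flips $a\oplus e_i$ (for $i\in I_a$), $b\oplus e_i$ (for $i\in I_b$), and $c\oplus e_i$ (for $i\in I_c$) into $R_f$, eventually forcing some slice to have a cardinality that is not a power of two -- hence not affine -- contradicting the original assumption.

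The principal obstacle is this $S=\emptyset$ subcase: ruling out ``all projections affine'' and ``all slices affine'' simultaneously requires a careful case analysis according to the partition $[k]=I_a\sqcup I_b\sqcup I_c$, where $I_a$ (respectively $I_b$, $I_c$) is the set of coordinates at which $a_i$ (respectively $b_i$, $c_i$) is the minority value among $\{a_i,b_i,c_i\}$.  In each case I expect to exhibit an explicit slice which, once all the forced bit-flips of $a,b,c,d$ are included, contains an inadmissible number of elements or an explicit XOR-triple pointing outside, yielding the arity-$(k-1)$ non-affine function needed to close the induction.
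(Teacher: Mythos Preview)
Your base cases and the $S\neq\emptyset$ subcase are fine, and the observation that affine projections force $d\oplus e_i\in R_f$ for every $i$ is correct and indeed used in the paper's argument.  The gap is the $S=\emptyset$ subcase.  The ``cardinality not a power of two'' strategy is not workable as stated: from slice- and projection-closure you only ever obtain \emph{lower} bounds on the size of a slice, never upper bounds, so you cannot conclude that a slice has a non-power-of-two cardinality.  What you actually need is to exhibit, inside some slice, three tuples whose XOR equals $d$ (the only tuple you know to be \emph{absent} from $R_f$); your propagated elements $a\oplus e_i$, $b\oplus e_j$, $c\oplus e_\ell$ do not obviously combine to give this.  You are also missing a second arity-reduction device that the paper uses alongside pinning: \emph{identifying two variables} whenever two coordinates $i,j$ have the same column pattern $(a_i,b_i,c_i)=(a_j,b_j,c_j)$.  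This is not covered by your $S\neq\emptyset$ case (which handles only constant columns) and is essential, because once duplicate columns are removed you may assume every pair of coordinates carries distinct non-constant patterns, leaving only finitely many configurations for any two chosen coordinates.

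The paper completes the $S=\emptyset$ case differently from your sketch: after the pinning and identification reductions it selects \emph{two} coordinates and splits into two explicit patterns (the two columns are either independent or complementary).  In each pattern it chains together five or six closure deductions---using both slice-affineness and projection-affineness---to force three specific tuples agreeing with $d$ on all but the two selected coordinates into $R_f$, and then pins the remaining $k-2$ coordinates to obtain an arity-$2$ function with exactly three nonzero values, which is non-affine.  This gives a concrete witness rather than a counting contradiction.  Your global partition $I_a\sqcup I_b\sqcup I_c$ is the right bookkeeping, but to finish you should localise to two coordinates and carry out the explicit closure chain rather than hoping for a cardinality obstruction.
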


\begin{proof}

Let $k$ be the arity of~$f$,
and let us denote the $i^{\textrm{th}}$ component of $k$-tuple $a\in R_f$ by $a_i$.
The proof is by induction on~$k$.
The lemma is trivially true for
$k=1$, since all functions of arity~$1$ have affine support.

For $k=2$, we note that since $R_f$ is not affine, it is
of the form
$R_f = \{(\alpha,\beta),(\bar{\alpha},\beta),(\bar{\alpha},\bar{\beta})\}$
for some $\alpha\in\{0,1\}$ and $\beta\in\{0,1\}$.
We can show that $\nCSP(\{f\})$ is \nP-hard by reduction from \eval($H$)
using \[H=
\left(
 \begin{array}{cc}
f(0,0) & f(0,1) \\
f(1,0) & f(1,1)\\
\end{array}
\right),
\]
which has rank~$2$ and exactly one entry that is~$0$. Given an
instance $G=(V,E)$ of \eval($H$) we construct an instance $I$ of
$\nCSP(\{f\})$ as follows. The variables of~$I$ are the vertices
of~$G$. For each edge $e=(u,v)$ of~$G$, add a constraint with
function~$f$ and variable sequence $u,v$. Corollary~\ref{bulgro} now
tells us that \eval($H$) is \nP-hard, so $\nCSP(\{f\})$ is \nP-hard.

Suppose $k>2$.
We start with some general arguments and notation.
For any $i\in\{1,\ldots,k\}$ and any $\alpha\in\{0,1\}$
let $f^{i=\alpha}$ be the
function of arity~$k-1$ derived from~$f$ by
pinning the $i$'th position to~$\alpha$.
That is, $f^{i=\alpha}(x_1,\ldots,x_{k-1})=
f(x_1,\ldots,x_{i-1},\alpha,x_{i+1},\ldots,x_k)$.
Also, let $f^{i=*}$ be the
projection of~$f$ onto all positions apart from position~$i$
(see Section~\ref{sec:project}).
Note that $\nCSP(\{f^{i=\alpha}\})\redT \nCSP(\{f,\delta_0,\delta_1\})$,
since  $f^{i=\alpha}$ can obviously be simulated by
$\set{f, \delta_0,\delta_1}$.
Furthermore,
by Lemma~\ref{lem:2pinning}, $\nCSP(\{f,\delta_0,\delta_1\})\redT
\nCSP(\{f\})$.
Thus, we can assume that $f^{i=\alpha}$ has affine support
--- otherwise, we are finished
by induction.
Similarly, by Lemma~\ref{lem:project},
$\nCSP(\set{f^{i=*}})\redT\nCSP(\set{f})$. Thus we can assume that $f^{i=*}$ has affine support
---
otherwise, we are finished by induction.

Now, recall that $R_f$ is not affine. Consider any $a,b,c\in R_f$ such that
$d=a\oplus b \oplus c\notin R_f$.
We have 4~cases.

{\bf Case 1: There are indices $1\leq i<j\leq k$ such that
$(a_i,b_i,c_i)=(a_j,b_j,c_j)$:
$\>$}
Without loss of generality, suppose $i=1$ and $j=2$.
Define the function~$f'$ of arity~$(k-1)$ by
$f'(r_2,\ldots,r_k) = f(r_2,r_2,\ldots,r_k)$.
Note that $R_{f'}$ is not affine since
the condition $a\oplus b\oplus c\notin R_{f}$ is
inherited by $R_{f'}$.
So, by induction, $\nCSP(\{f'\})$ is \nP-hard.
Now note that
$\nCSP(\{f'\})\redT \nCSP(\{f\})$. To see this, note that any instance~$I_1$
of $\nCSP(\{f'\})$ can be turned into an instance~$I$ of
$\nCSP(\{f\})$ by repeating the first variable in the sequence of variables
for each constraint.

{\bf Case 2: There is an index $1\leq i \leq k$ such that
$a_i=b_i=c_i$:$\>$}
Since $d$ is not in $R_f$ and $d_i=a_i$,
we find that $f^{i=a_i}$
does not have affine support,
contrary to earlier assumptions.

Having finished Cases~1 and~2,
we may assume without loss
of generality that we are in Case~3 or Case~4 below, where
$\set{\alpha,\beta}\in\set{0,1}$,
$\bar{\alpha}=1-\alpha$,
$\bar{\beta}=1-\beta$ and $a',b',c'\in\set{0,1}^{k-2}$.

{\bf Case 3: $a=(\bar{\alpha},\bar{\beta},a')$,
$b=(\bar{\alpha},\beta,b')$, $c=(\alpha,\bar{\beta},c')$:$\>$}
Since $R_{f^{1=*}}$ is affine and $a$, $b$ and $c$ are in~$R_f$,
we must have either
$d=(\alpha,\beta,d')\in R_f$ or $e=(\bar{\alpha},\beta,d')\in R_f$, where
$d'=a'\oplus b'\oplus c'$. In the first case, we are done
(we have contradicted the assumption that $d\not\in R_f$), so assume
that $e\in R_f$ but $d\not\in R_f$.
Similarly, since $R_{f^{2=*}}$ is affine, we may assume that
$g=(\alpha,\bar{\beta},d')\in R_f$.
Since $R_{f^{1=\bar{\alpha}}}$ is affine
and $a$, $b$ and $e$ are in $R_f$,
we find that
$h=a\oplus b\oplus e=(\bar{\alpha},\bar{\beta},c')\in R_f$.
Since $R_{f^{2=\bar{\beta}}}$ is affine
and $a$, $c$ and $g$ are in $R_f$,
we find that
$i=(\bar{\alpha},\bar{\beta},b')\in R_f$.
Also, since
$R_{f^{2=\bar{\beta}}}$ is affine
and $a$, $h$ and $i$ are in $R_f$,
we find that
$j=(\bar{\alpha},\bar{\beta},d')\in R_f$.
Let $f'(r_1,r_2) = f(r_1,r_2,d_3,\ldots,d_k)$.
Since $e$, $g$ and $j$ are in~$R_f$ but $d$ is not,
we have
$(\bar{\alpha},\beta),(\alpha,\bar{\beta}),(\bar{\alpha},\bar{\beta})\in
R_{f'}$, but $(\alpha,\beta)\notin R_{f'}$.
Thus, $f'$ does not have affine support
and $\nCSP(\{f'\})$ is \nP-hard by induction.
Also, $\nCSP(\{f'\})\redT\nCSP(\{f\})$ by Lemma~\ref{lem:2pinning}.

{\bf Case 4: $a=(\bar{\alpha},\alpha,a')$, $b=(\bar{\alpha},\alpha,b')$,
$c=(\alpha,\bar{\alpha},c')$:$\>$}
Since $R_{f^{1=*}}$ is affine and $a$, $b$ and $c$ are in $R_f$ but $d$ is not,
we have $e=(\bar{\alpha},\bar{\alpha},d')\in R_f$.
Similarly, since $R_{f^{2=*}}$ is affine and $a$, $b$ and $c$ are in $R_f$ but
$d$ is not, we have $g=(\alpha,\alpha,d')\in R_f$.
Now since $R_{f^{1=\bar{\alpha}}}$ is affine and $a$, $b$ and $e$ are in $R_f$,
we have $h=(\bar{\alpha},\bar{\alpha},c')\in R_f$.
Also, since $R_{f^{2={\alpha}}}$ is affine and $a$, $b$ and $g$ are in $R_f$,
we have $i=(\alpha,\alpha,c')\in R_f$.

Let $f'(r_1,r_2) = f(r_1,r_2,c_3,\ldots,c_k)$.
If $j=(\bar{\alpha},\alpha,c')\not\in R_f$
then $f'$ does not have affine support (since $c$, $h$ and $i$ are in $R_f$)
so we finish by induction as in Case~3.
Suppose $j\in R_f$.
Since $R_{f^{1=\bar{\alpha}}}$ is affine
and $a$, $b$ and $j$ are in $R_f$, we have
$\ell=(\bar{\alpha},\alpha,d')\in R_f$.
Let $f''(r_1,r_2) = f(r_1,r_2,d_3,\ldots,d_k)$.
Then $f''$ does not have affine support (since $e$, $g$ and $\ell$ are in $R_f$
but $d$ is not)
so we finish by induction as in Case~3.
\end{proof}

Lemma~\ref{lem:CH} showed that
$\nCSP(\{f\})$ is \nP-hard when $f$ does not have affine support.
The following lemma gives another (rather technical, but useful) condition
which implies that
$\nCSP(\{f\})$ is \nP-hard.
We start with some notation.
Let $f$ be an arity-$k$ function.
For a value $b\in\{0,1\}$, an index
$i\in\{1,\ldots,k\}$,
and a tuple
$y\in\{0,1\}^{k-1}$,
let $\substuple{y}{i}{b}$ denote the tuple $x\in\{0,1\}^k$
formed by setting $x_i=b$
and $x_j=y_j$ $(j\in\set{1,\ldots,k}\setminus\set{i})$.

We say that index~$i$ of $f$ is \emph{useful} if there is a
tuple~$y$ such that $f(\substuple{y}{i}{0})>0$ and
$f(\substuple{y}{i}{1})>0$. We say that $f$ is {\it product-like\/}
if, for every useful index~$i$, there is a rational
number~$\lambda_i$ such that, for all $y\in\{0,1\}^{k-1}$,
\begin{equation}\label{condition}
f(\substuple{y}{i}{0})=\lambda_i f(\substuple{y}{i}{1}).\end{equation}
If every position~$i$ of~$f$ is useful then being product-like is the same
as being of product type. However, being product-like is less demanding
because it does not restrict indices that are not useful.

\begin{lemma}
\label{hard}
\label{four}
If $f\in \mathcal{F}_2$
is not product-like then {\rm\#CSP}$(\{f\})$ is \nP-hard.
\end{lemma}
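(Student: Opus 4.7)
I plan to prove the lemma by strong induction on the arity $k$ of $f$, assuming throughout that $f$ has affine support (otherwise Lemma~\ref{lem:CH} already yields \nP-hardness). A key consequence of affine support is that whenever index~$1$ is useful for $f$, the relation $R_f$ is closed under flipping the first coordinate: for every tuple $y\in\{0,1\}^{k-1}$, either both of $f(y^{1=0})$ and $f(y^{1=1})$ are positive or both are zero. Hence any tuple witnessing failure of product-likeness at index~$1$ must itself be col-useful. For the base case $k=2$, a brief enumeration of affine binary relations (singletons, six two-tuple cosets, and $\{0,1\}^2$) shows that the only way a binary $f$ can be non-product-like is if $R_f=\{0,1\}^2$ with $\det G\ne 0$, where $G=(f(i,j))$. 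The symmetric function $\tilde g(x,y)=f(x,y)f(y,x)$, freely simulated by two $f$-constraints, then has all four entries positive and $\det\tilde G=(\det G)(G_{00}G_{11}+G_{01}G_{10})\ne 0$, so Corollary~\ref{bulgro} yields \nP-hardness of $\mathrm{Eval}(\tilde G)$, and hence of $\nCSP(\{f\})$.

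For the inductive step $k\ge 3$, pick col-useful tuples $y_0,y_1\in\{0,1\}^{k-1}$ with $\phi(y_0)=\lambda_1\ne\lambda_1'=\phi(y_1)$ (both col-useful, by the remark above), and let $S=\{c\in\{2,\dots,k\}:y_0^c\ne y_1^c\}$. The aim is to produce a simulated arity-$(k-1)$ function $f'$ that is still not product-like at index~$1$ and then invoke induction. \textbf{(a)} If $|S|<k-1$, pin some coordinate $c\notin S$ to the common value $y_0^c=y_1^c=:\alpha$ via Lemma~\ref{lem:2pinning}; both witnesses survive into $f^{c=\alpha}$. \textbf{(b)} If $|S|=k-1\ge 3$, pigeonhole on $\{y_0^c:c\in S\}\subseteq\{0,1\}$ gives $c,c'\in S$ with $y_0^c=y_0^{c'}$ (and hence $y_1^c=y_1^{c'}$, since $y_1=\overline{y_0}$ on $S$); identifying coordinates $c$ and $c'$ by variable repetition (a free simulation) produces $f'$ with both witnesses intact.

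The only residual case is $k=3$, $|S|=2$, so after relabelling $y_0=(0,1)$ and $y_1=(1,0)$. The four ``core'' tuples $(x_1,0,1)$ and $(x_1,1,0)$ all lie in $R_f$, and since $R_f$ is a coset containing the $2$-dimensional affine plane $\{x_2\oplus x_3=1\}$, either $R_f$ equals that plane or $R_f=\{0,1\}^3$. If $R_f=\{0,1\}^3$, every $(y_2,y_3)$ is col-useful; the pinnings $f^{2=0}$ and $f^{3=0}$ (Lemma~\ref{lem:2pinning}) cannot both be product-like at index~$1$, because together they would force $\phi((0,0))$ to equal both $\lambda_1$ (from $f^{2=0}$, via $y_3=1$) and $\lambda_1'$ (from $f^{3=0}$, via $y_2=1$), contradicting $\lambda_1\ne\lambda_1'$; so at least one pinning reduces us to a non-product-like binary function, handled by the base case. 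If instead $R_f$ is the affine plane, the projection $f^{2=*}(x_1,x_3)=f(x_1,0,x_3)+f(x_1,1,x_3)$ (Lemma~\ref{lem:project}) evaluates to $f(x_1,1,0)$ on $x_3=0$ (ratio $\lambda_1'$) and to $f(x_1,0,1)$ on $x_3=1$ (ratio $\lambda_1$): again a binary function not product-like at index~$1$, closed out by the base case. The main subtlety I expect to handle carefully is the affine-closure claim that usefulness of index~$1$ under affine support forces ``bipolar'' columns—both values positive or both zero—since this is what rules out ``partially useful'' witnesses and allows the induction to proceed throughout with a clean pair of col-useful witnesses $y_0,y_1$.
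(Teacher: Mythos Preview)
Your inductive approach is sound in outline but differs substantially from the paper's. The paper argues directly, without induction and without assuming affine support: picking a useful index~$i$ at which (\ref{condition}) fails, it forms the $2\times 2^{k-1}$ matrix $A$ with $A_{b,y}=f(\substuple{y}{i}{b})$ and considers the symmetric $2\times 2$ matrix $A'=AA^T$. Usefulness of~$i$ makes all four entries of $A'$ positive, and Cauchy--Schwarz gives $\det A'\ge 0$ with equality exactly when the two rows of $A$ are proportional, i.e., when a suitable $\lambda_i$ exists. So $A'$ has rank~2, $\text{\eval}(A')$ is \nP-hard by Corollary~\ref{bulgro}, and a one-gadget-per-edge construction (two $f$-constraints sharing $k-1$ fresh variables) reduces $\text{\eval}(A')$ to $\nCSP(\{f\})$. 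This bypasses Lemma~\ref{lem:CH}, Lemma~\ref{lem:2pinning}, Lemma~\ref{lem:project}, and all the structural casework on $R_f$ that your reduction-to-arity-2 strategy requires. Both routes land on Corollary~\ref{bulgro}; the paper just gets there in one step for every arity, whereas your argument trades that slickness for a more hands-on combinatorial reduction.

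There is one small hole in your residual case $k=3$, $|S|=2$. Since $|S|=2$ forces $y_1=\overline{y_0}$, the pair $\{y_0,y_1\}$ could equally be $\{(0,0),(1,1)\}$ rather than $\{(0,1),(1,0)\}$, and no free relabelling (coordinate permutation, swapping $y_0\leftrightarrow y_1$) carries one to the other; flipping a single input bit is not available without $\chi_{\neq}$. The fix is immediate: when $y_0\in\{(0,0),(1,1)\}$ you have $y_0^2=y_0^3$ and $y_1^2=y_1^3$, so identifying coordinates~2 and~3 --- exactly the move from your case~(b), which you unnecessarily restricted to $|S|\ge 3$ --- yields a binary $f'$ with $R_{f'}=\{0,1\}^2$ and index-1 ratios $\lambda_1\neq\lambda_1'$, handled by your base case.
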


\begin{proof}
We'll use Corollary~\ref{bulgro} to prove hardness, following an
argument from~\cite{dgp}. Choose a useful index~$i$ so that there is
no $\lambda_i$ satisfying~(\ref{condition}).

Suppose $f$ has arity~$k$.
Let $A$ be the $2 \times 2^{k-1}$ matrix
such that for $b\in\{0,1\}$ and $y\in \{0,1\}^{k-1}$,
$A_{b,y}=f(\substuple{y}{i}{b})$. Let $A'=A A^T$.

First, we show that \eval($A'$) is \nP-hard.
Note that $A'$ is the following symmetric $2\times 2$ matrix
with non-negative rational entries.
\[
\left(
\begin{array}{cc}
\sum_y A_{0,y}^2 &\sum_y A_{0,y}A_{1,y}\\
\sum_y A_{0,y}A_{1,y} & \sum_y A_{1,y}^2\\
\end{array}
\right)
=
\left(
\begin{array}{cc}
\sum_y {f(\substuple{y}{i}{0})}^{2}
   &\sum_y f(\substuple{y}{i}{0})f(\substuple{y}{i}{1})\\
\sum_y f(\substuple{y}{i}{0})f(\substuple{y}{i}{1})
   & \sum_y f(\substuple{y}{i}{1})^2\\
\end{array}
\right)
\]
Since index~$i$ is useful, all four entries of~$A'$ are positive. To
show that \eval($A'$) is \nP-hard by Corollary~\ref{bulgro}, we just
need to show that its determinant is non-zero.  By Cauchy-Schwartz,
the determinant is non-negative, and is zero only if $\lambda_i$
exists, which have assumed not to be the case. Thus \eval($A'$) is
\nP-hard by Corollary~\ref{bulgro}.

Now we reduce \eval($A'$) to \#CSP$(\{f\})$.
To do this, take an undirected graph $G$ which is an instance
of \eval($A'$). Construct an instance~$Y$ of \#CSP$(\{f\})$.
For every vertex~$v$ of~$G$ we introduce a variable $x_v$ of $Y$.
Also, for every edge $e$ of~$G$ we introduce $k-1$ variables
$x_{e,1},\ldots,x_{e,{k-1}}$ of~$Y$.
We introduce constraints in~$Y$ as follows. For each edge
$e=(v,v')$ of~$G$ we introduce constraints
$f(x_v,x_{e,1},\ldots,x_{e,k-1})$ and
$f(x_{v'},x_{e,1},\ldots,x_{e,k-1})$ into~$Y$,
where we have assumed, without loss of generality, that
the first index is useful.

It is clear that \eval($A'$) is exactly equal to the partition function
of the \#CSP$(\{f\})$ instance~$Y$.
\end{proof}

For $w\in\Rationals^+$, let $U_{w}$ denote the unary function mapping 0 to~1 and 1 to~$w$.
Note that
$U_0=\delta_0$,
and $U_1$ gives the constant (0-ary function) 1,
occurrences of which leave the partition function unchanged.
So, by Lemma~\ref{lem:2pinning}, we can discard these constraints since they do not add to the complexity of the problem. Note, by the observation above about proportional functions, that
the functions  $U_w$ include all unary functions except for $\delta_1$ and
the constant 0.
We can discard $\delta_1$ by Lemma~\ref{lem:2pinning}, and if
the constant 0 function is in $\mathcal{F}$, any instance $I$ where it appears as a constraint has $Z(I)=0$. So again we can discard
these constraints since they
not add to the complexity of the problem.

Thus $U_w$ will be called \emph{nontrivial} if $w\notin\set{0,1}$.
Let  $\oplus_{k}:\{0,1\}^{k}\to\{0,1\}$ be the arity-$k$ parity
function that is 1 iff its argument has
an odd number of $1$s. Let
$\neg\oplus_{k}:\{0,1\}^{k}\to\{0,1\}$ be the function
$1-\oplus_{k}$.
The following lemma shows that even a simple function like $\oplus_3$
can lead to intractable \#CSP instances when it is combined with
a nontrivial weight function $U_\lambda$.

\begin{lemma}
$\nCSP(\oplus_{3},U_{\lambda},\delta_0,\delta_1)$ and
$\nCSP(\neg\oplus_{3},U_{\lambda},\delta_0,\delta_1)$
are both \nP-hard, for any positive $\lambda\neq 1$.
\label{one}
\end{lemma}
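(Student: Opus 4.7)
The plan is to reduce from the Weight Enumerator problem, which Lemma~\ref{WE} shows is \nP-hard for every fixed positive rational $\lambda\neq 1$. Given a generating matrix $A\in\{0,1\}^{r\times C}$ of rank~$r$, I will construct an instance $I$ of $\nCSP(\oplus_3,U_\lambda,\delta_0,\delta_1)$ whose partition function equals $W_A(\lambda)$.

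The instance~$I$ has primary variables $x_1,\ldots,x_r$ (one per row of~$A$) and, for each column $j\in\{1,\ldots,C\}$, an auxiliary variable $y_j$ carrying the unary constraint $U_\lambda(y_j)$. For each column~$j$, with support $S_j=\{i:A_{ij}=1\}$, I then need to enforce the GF(2) linear equation $y_j\oplus\bigoplus_{i\in S_j}x_i=0$. The key observation is that $\oplus_3$ together with $\delta_0,\delta_1$ can simulate any linear equation $v_1\oplus\cdots\oplus v_m=c$: introduce fresh chain variables $u_1,\ldots,u_{m-2}$ and impose $\oplus_3(v_1,v_2,u_1),\ \oplus_3(u_1,v_3,u_2),\ \ldots,\ \oplus_3(u_{m-3},v_{m-1},u_{m-2})$, followed by a final constraint $\oplus_3(u_{m-2},v_m,w)$ where $w$ is a fresh variable pinned by $\delta_0$ or $\delta_1$ to the value that makes the accumulated parity equal~$c$. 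Since each $\oplus_3$ is $0/1$-valued and each chain variable is uniquely forced by the preceding ones, summing out the chain variables yields the indicator of the desired linear equation.

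With these gadgets in place, summing over each $y_j$ collapses to a factor $U_\lambda((xA)_j)=\lambda^{(xA)_j}$, so
\[
Z(I)\;=\;\sum_{x\in\{0,1\}^r}\prod_{j=1}^C\lambda^{(xA)_j}\;=\;\sum_{x\in\{0,1\}^r}\lambda^{\|xA\|}\;=\;W_A(\lambda),
\]
establishing hardness of $\nCSP(\oplus_3,U_\lambda,\delta_0,\delta_1)$. For $\nCSP(\neg\oplus_3,U_\lambda,\delta_0,\delta_1)$ the construction is essentially identical: since $\neg\oplus_3(v_1,v_2,v_3)=1$ iff $v_1\oplus v_2\oplus v_3=0$, the same chain gadget works with the pinning value of~$w$ at the final step complemented to achieve the target parity~$c$.

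The main thing to verify carefully is the bookkeeping in the chain simulation: for every assignment of the primary variables $v_1,\ldots,v_m$ satisfying the target linear equation, there must be exactly one satisfying assignment to the chain variables $u_1,\ldots,u_{m-2}$ and the pinned variable~$w$, so that no spurious multiplicative factor is introduced into $Z(I)$. Because the $u_k$'s deterministically propagate partial XORs and each $\oplus_3$ (or $\neg\oplus_3$) constraint is $\{0,1\}$-valued, this is routine but is the one place where care is needed.
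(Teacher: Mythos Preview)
Your proposal is correct and follows essentially the same approach as the paper: reduce from the Weight Enumerator problem (Lemma~\ref{WE}) by using $\oplus_3$ together with $\delta_0,\delta_1$ to simulate arbitrary GF(2) linear equations, and then attach $U_\lambda$ to the code-word bits. The paper's simulation of $\oplus_k$ uses a divide-and-conquer splitting rather than your linear chain, and it is less explicit about the row-variable/column-variable bookkeeping, but these are cosmetic differences.
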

\begin{proof}
We give a
reduction from
computing the
Weight Enumerator of a Linear Code, which was shown to be
\nP-hard in Lemma~\ref{WE}.
In what follows, it is sometimes convenient to
view $\oplus_{k}$, $\delta_{0}$, etc., as relations
as well as functions to $\set{0,1}$.

We first argue that for any $k$, the relation
$\oplus_k$ can be simulated by $\set{\oplus_3, \delta_0, \delta_1}$.
For example, to simulate
$x_1 \oplus \cdots \oplus x_k$ for $k>3$, take new variables $y$, $z$ and $w$ and
let $m=\lceil k/2 \rceil$ and use $x_1 \oplus \cdots \oplus x_m \oplus y$
and $x_{m+1} \oplus \cdots \oplus x_k \oplus z$ and $y \oplus z\oplus w$
and $\delta_0(w)$.

Since $\set{\oplus_3, \delta_0, \delta_1}$ can be used to simulate
any relation $\oplus_k$, we can use
$\set{\oplus_3, \delta_0, \delta_1}$
to simulate an arbitrary system of linear equations over
$\mathrm{GF}(2)$.
In particular we can use them to simulate the subspace
$\Upsilon$
of code words for a given generating matrix $A$.

Finally, we can use $U_{\lambda}$ to simulate the
function which evaluates the weight enumerator on $\Upsilon$.
Then, since $\lambda\neq 0,1$,  we can  apply Lemma~\ref{WE} to complete the argument.
The same proof, with minor modifications, applies to $\neg\oplus_3$.
\end{proof}

\begin{lemma}
Suppose
$f\in\mathcal{F}_2$ is not of
product type.  Then,
for any
positive $\lambda\neq 1$, there exists a constant~$c$,
depending on $f$, such that
$\nCSP(\{f,\delta_0,\delta_1,U_{\lambda},U_{c}\})$ is \nP-hard.
\label{two}
\end{lemma}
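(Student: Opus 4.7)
The plan is to handle the easy cases with the earlier lemmas and then reduce the main case to the hardness of $\eval(H)$ for a suitable rank-$2$ matrix $H$ via Corollary~\ref{bulgro}. First, by Lemma~\ref{lem:CH} I may assume $R_f$ is affine, and by Lemma~\ref{four} I may assume $f$ is product-like --- otherwise $\nCSP(\{f\})$ is already \nP-hard (so the larger problem is too). Under these assumptions, I would argue that $R_f \subseteq \{0,1\}^k$ must admit a GF$(2)$-linear equation of arity at least~$3$ in any minimal basis. Otherwise $R_f$ is a conjunction of $\chi_{=}$ and $\chi_{\neq}$ constraints, which combined with the product-like factorisation of~$f$ along its useful positions would express~$f$ as a product of unaries and $\chi_{=}/\chi_{\neq}$ constraints, contradicting the hypothesis that $f$ is not of product type.

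Next I would pick such an essential arity-$t$ equation $x_{j_1}\oplus\cdots\oplus x_{j_t}=b$ with $t\ge 3$, arrange a Gaussian-elimination basis in which $j_1$ is its leading variable and no other basis equation imposes extra constraints on $\{j_1,j_2,j_3\}$, and pin every variable other than $x_{j_1},x_{j_2},x_{j_3}$ to appropriate values via $\delta_0,\delta_1$. This would restrict $f$ to an arity-$3$ function $g$ whose underlying relation is either $\oplus_3$ or $\neg\oplus_3$, taking four strictly positive rational values on its support (inherited from~$f$).

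With this arity-3 function in hand, I would construct the binary gadget
\[
  h(x,y)\;=\;\sum_z g(x,y,z)\,g(y,x,z)\,U_c(z).
\]
For each $(x,y)\in\{0,1\}^2$, exactly one $z$ lies in the support of $g$, so $h$ is a symmetric $2\times 2$ matrix $H$ with all four entries strictly positive; the diagonal entries and the off-diagonal entry carry different powers of~$c$, so $\det H$ is a nontrivial polynomial in~$c$ and vanishes at only one positive value. I would choose $c$ to be any positive rational with $c\ne 1$ and $\det H\ne 0$; since only one positive value of~$c$ is excluded, a valid $c$ always exists and depends only on~$f$. Then $H$ has rank~$2$ with no zero entries, so Corollary~\ref{bulgro} makes $\eval(H)$ \nP-hard, and the natural edge-gadget reduction (realising each copy of~$g$ via~$f$ with $k-3$ fresh pinning variables per copy) yields $\eval(H)\redT\nCSP(\{f,\delta_0,\delta_1,U_\lambda,U_c\})$.

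The main obstacle will be the arity-three reduction: I need to show that ``$R_f$ not arity-$2$ expressible'' always permits a choice of three positions and a pinning of the remaining variables that produces exactly $\oplus_3$ or $\neg\oplus_3$ as the residual support, without interference from other basis equations of~$R_f$. This is a short but essential linear-algebraic step that underpins the entire reduction.
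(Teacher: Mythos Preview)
Your reduction has two genuine gaps, and both stem from relying on pinning where the paper uses \emph{projection} (summing out variables, Lemma~\ref{lem:project}).

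\textbf{Gap 1: the ``arity $\leq 2$ $\Rightarrow$ product type'' step.}  You claim that if $R_f$ is cut out by equations of arity $\leq 2$ and $f$ is product-like, then $f$ must be of product type.  This is false.  Take $k=4$, $R_f=\{x:x_1=x_2,\ x_3=x_4\}$, and set $f(0,0,0,0)=f(0,0,1,1)=f(1,1,0,0)=1$, $f(1,1,1,1)=5$.  Every defining equation has arity~$2$, and \emph{no} index is useful (flipping any single bit leaves $R_f$), so $f$ is vacuously product-like.  Yet $f$ is not of product type: any product of unaries and $\chi_=$'s would force $f(1,1,1,1)=f(0,0,1,1)f(1,1,0,0)/f(0,0,0,0)=1\neq 5$.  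The paper handles this case by first projecting onto one representative per equal/complementary class of columns; the projected function $f^*$ then has \emph{every} index useful, so ``not of product type'' genuinely forces ``not product-like'', and Lemma~\ref{four} applies to $f^*$.

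\textbf{Gap 2: pinning down to $\oplus_3$.}  Even when $R_f$ has an essential arity-$\geq 3$ equation, you cannot always isolate three coordinates by pinning the rest.  Take $k=6$ with $R_f$ defined by $x_1\oplus x_4\oplus x_5=0$, $x_2\oplus x_4\oplus x_6=0$, $x_3\oplus x_5\oplus x_6=0$.  Every nonzero vector in $R_f^\perp$ meets every coordinate triple, so there is no basis in which some equation uses three variables avoided by all the others.  Concretely, for any choice of three coordinates to keep, pinning the remaining three cuts $|R_f|=8$ down to at most $2$ elements, never to a $4$-element $\oplus_3$ relation.  The paper again uses projection: summing out the three hidden coordinates gives $f'$ supported on $\neg\oplus_3$ (here $x_1\oplus x_2\oplus x_3=0$), with each support point receiving the sum of two $f$-values.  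Pinning cannot replace this.

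Your endgame --- building a symmetric rank-$2$ matrix $H$ from the arity-$3$ function and $U_c$, then invoking Corollary~\ref{bulgro} --- is a legitimate alternative to the paper's route through Lemma~\ref{one}.  Indeed it sidesteps $U_\lambda$ entirely.  But to get there you must first obtain an arity-$3$ function with $\oplus_3$ or $\neg\oplus_3$ support, and that requires projection, not just pinning.
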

\begin{proof}

If $f$ does not have affine support, the result follows by
Lemma~\ref{lem:CH}. So suppose $f$ has affine support.
Consider the underlying relation $R_f$, viewed as a table.
The rows of the table represent the tuples of the relation.
Let $J$ be the set of columns on which the relation is not constant.
That is, if $i\in J$ then there is a row~$x$ with $x_i=0$ and a row~$y$ with $y_i=1$.
Group the columns in~$J$
into equivalence classes:  two columns are equivalent iff
they are equal or complementary.  Let $k$ be the number of
equivalence classes.
Take one column from each
of the $k$~equivalence classes as a representative, and focus
on the arity-$k$ relation~$R$ induced by those columns.\vspace{1ex}

\noindent{\bf Case 1: Suppose $R$ is the complete relation of arity~$k$.}\\
Let $f^*$ be the projection of~$f$ onto the $k$ columns of~$R$.
By Lemma~\ref{lem:project},\vspace{-1.4ex}
\[\nCSP(\{f^*\})
\, \redT\, \nCSP(\{f\})\, \redT\,
\nCSP(\{f,\delta_0,\delta_1,U_{\lambda},U_{c}\}).\vspace{-1.4ex}\]
We will argue that
$\nCSP(\{f^*\})$ is \nP-hard. To see this, note that every column
of~$f^*$ is useful. Thus, if $f^*$ were product-like, we could
conclude that $f^*$ was of product type.  But this would imply that
$f$ is of product type, which is not the case by assumption. So
$f^*$ is not product-like and hardness follows from
Lemma~\ref{hard}.\vspace{1ex}

\noindent{\bf Case 2: Suppose $R$ is not the complete relation of
arity~$k$.}\\
We had assumed that $R_f$ is
affine. This means that given three vectors, $x$, $y$ and $z$, in $R_f$,
$x\xor y \xor z$ is in $R_f$ as well.
The arity-$k$ relation~$R$ inherits this property,
so is also affine.

Choose a minimal set of columns of~$R$ that
do not induce the complete relation.  This exists by assumption.
Suppose there are $j$~columns in this minimal set.  Observe that
$j\neq 1$ because there are no constant columns in~$J$.   Also $j\not=2$,
since otherwise the two columns would be related by equality
or disequality, contradicting the preprocessing step. The argument here is that on two
columns, $R$ cannot have exactly three tuples because it is affine,
and having tuples $x$, $y$ and $z$ in would require
the fourth tuple $x\xor y \xor z$.
But if it has two tuples then,  because there are no constant columns, the
only possibilities are either $(0,0)$ and $(1,1)$, or $(0,1)$ and $(1,0)$.
Both contradict the preprocessing step, so $j\geq3$.

Let $R'$ be the restriction of~$R$ to the $j$~columns.
Now $R'$ of course has fewer than $2^j$ rows, and at least $2^{j-1}$
by minimality.  It is affine, and hence must be
$\oplus_j$ or~$\neg\oplus_j$. To see this, first note that the
size of $R'$ has to be a power of~$2$ since $R'$ is the solution to a system
of linear equations. Hence the size of $R'$ must be $2^{j-1}$. Then,
since there are $j$ variables, there can only be one defining equation.
And, since every subset of $j-1$ variables induces a complete relation,
this single equation must involve all variables. Therefore, the equation is $\xor_j$ or
$\neg\xor_j$.

Let $f'$ be the projection of~$f$ onto the $j$ columns just identified.
Let $f''$ be further
obtained by pinning all but three of the $j$~variables to~0.
Pinning $j-3$ variables to~$0$ leaves a single
equation involving all three remaining variables.
Thus  $R_{f''}$ must be $\xor_3$ or $\neg \xor_3$.

Now define the symmetric function $f'''$ by\vspace{-1.4ex}
\[f'''(a,b,c) =
f''(a,b,c)\*f''(a,c,b)\*f''(b,a,c)\*f''(b,c,a)\*f''(c,a,b)\*f''(c,b,a),\vspace{-1.4ex}\]
Note that $R_{f'''}$ is $\oplus_3$ or $\neg\oplus_3$, since $R_{f''}$ is symmetric
and hence $R_{f'''}=R_{f''}$.

To summarise: using~$f$
and the constant functions~$\delta_0$ and~$\delta_1$, we have simulated
a function $f'''$ such that its underlying relation
$R_{f'''}$ is either $\oplus_3$ or $\neg\oplus_3$.
Furthermore, if triples~$x$ and~$y$ have the
same number of $1$s then $f'''(x)=f'''(y)$.

We can now simulate an unweighted version of $\oplus_3$ or
$\neg\oplus_3$ using $f'''$ and a unary function $U_c$, with $c$
set to a conveniently-chosen value. There
are two cases. Suppose first that the affine support of $f'''$ is
$\neg\xor_3$. Then let $w_{0}$ denote the value of $f'''$ when
applied to the $3$-tuple $(0,0,0)$ and let $w_2$ denote
$f'''(0,1,1)=f'''(1,0,1)=f'''(1,1,0)$.  Recall that $f'''(x)=0$ for any
other $3$-tuple~$x$.  Now let $c={(w_0/w_2)}^{1/2}$.
Note from the definition of $f'''$ that $w_{0}$ and $w_{2}$ are
squares of rational numbers, so $c$ is also rational. Define a
function~$g$ of arity~3 by $g(\alpha,\beta,\gamma) =
U_c(\alpha)U_c(\beta) U_c(\gamma)f'''(\alpha, \beta, \gamma)$. Note
that $g(0,0,0)=w_0$ and $g(0,1,1)=g(1,0,1)=g(1,1,0) = c^2 w_2 = w_0$. Thus,
$g$ is a pure affine function with affine support $\neg\xor_3$ and
range $\{0,w_0\}$. The other case, in which the affine support of
$f'''$ is $\xor_3$, is similar.

We have established a reduction from either
$\nCSP(\oplus_{3},U_{\lambda},\delta_0,\delta_1)$
or $\nCSP(\neg\oplus_{3},U_{\lambda},\delta_0,\delta_1)$,
which are both \nP-hard by Lemma~\ref{one}.
\end{proof}

\begin{lemma}
\label{five} If $f\in\mathcal{F}_2$ is not of product type, then
$\nCSP(\{f,\delta_0,\delta_1,U_{\lambda}\})$ is \nP-hard for any positive~$\lambda\neq 1$.
\end{lemma}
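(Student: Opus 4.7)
The plan is to use polynomial interpolation to remove the dependence on the auxiliary constant $c$ produced by Lemma~\ref{two}. That lemma supplies a positive rational constant $c$, depending only on $f$, such that $\nCSP(\{f,\delta_0,\delta_1,U_{\lambda},U_{c}\})$ is \nP-hard. If $c=1$, then $U_c$ is the trivial unary function (identically~$1$) and there is nothing to do: $\nCSP(\{f,\delta_0,\delta_1,U_{\lambda}\})$ is already \nP-hard. So I may assume $c\neq 1$, and I will exhibit a polynomial-time Turing reduction from $\nCSP(\{f,\delta_0,\delta_1,U_{\lambda},U_{c}\})$ to $\nCSP(\{f,\delta_0,\delta_1,U_{\lambda}\})$.

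The key simple observation is that for any non-negative integer~$k$, the unary function $U_{\lambda^k}$ is simulated by applying $k$ independent copies of the constraint $U_{\lambda}$ to the same variable. Since $\lambda$ is a positive rational with $\lambda\neq 1$, the values $1,\lambda,\lambda^{2},\ldots$ are pairwise distinct.

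Now let $I$ be an instance of $\nCSP(\{f,\delta_0,\delta_1,U_{\lambda},U_{c}\})$ in which $U_c$ appears in exactly $m$ constraints, applied to variables $v_1,\ldots,v_m$ (not necessarily distinct). For each $k\in\{0,1,\ldots,m\}$ I would form the instance $I_k$ obtained from $I$ by replacing each occurrence $U_c(v_j)$ with $k$ copies of $U_{\lambda}(v_j)$. Writing $N_i$ for the sum of weights (with the $U_c$ contributions removed) over those configurations $\sigma$ for which exactly $i$ of the values $\sigma(v_1),\ldots,\sigma(v_m)$ equal~$1$, we have
\[
Z(I_k)\;=\;\sum_{i=0}^{m}(\lambda^{k})^{i}\,N_i\;=\;\sum_{i=0}^{m}\lambda^{ki}\,N_i.
\]
Querying the oracle for $\nCSP(\{f,\delta_0,\delta_1,U_{\lambda}\})$ on $I_0,\ldots,I_m$ yields an $(m{+}1)\times(m{+}1)$ linear system in the unknowns $N_0,\ldots,N_m$ whose coefficient matrix is the Vandermonde $(\lambda^{ki})_{0\le k,i\le m}$. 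This matrix is non-singular, since the $m+1$ values $\lambda^{0},\lambda^{1},\ldots,\lambda^{m}$ are distinct, so I can recover every $N_i$ in polynomial time and then return $Z(I)=\sum_{i=0}^{m}c^{i}N_i$.

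The only things to check are that the reduction is polynomial-time and that $c$ is known explicitly: the size bound $|I_k|\le |I|+km\le |I|+m^{2}$ is polynomial, and inspecting the proof of Lemma~\ref{two} shows that $c=(w_0/w_2)^{1/2}$ is an explicit positive rational computed from~$f$. The main conceptual content is just recognising that the hypotheses $\lambda>0$ and $\lambda\neq 1$ are exactly what is needed to make the Vandermonde system invertible, and that the specific value of $c$ is irrelevant once the $N_i$ have been recovered. Combined with Lemma~\ref{two}, this completes the reduction and establishes the hardness claim.
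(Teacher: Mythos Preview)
Your proof is correct and follows essentially the same interpolation argument as the paper: both replace each occurrence of $U_c$ by $k$ copies of $U_\lambda$, observe that the resulting partition function is the degree-$m$ polynomial $Z(I;w)=\sum_i N_i w^i$ evaluated at $w=\lambda^k$, and interpolate from $m{+}1$ distinct powers of~$\lambda$ to recover the coefficients and hence $Z(I;c)$. Your write-up is in fact slightly more careful than the paper's, in that you explicitly note the Vandermonde structure, use the correct number $m{+}1$ of evaluation points, and dispose of the degenerate case $c=1$.
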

\begin{proof}
Take an instance $I$ of $\nCSP(\{f,\delta_0,\delta_1,U_{\lambda},U_{c}\})$,
from Lemma~\ref{two}, with $n$ variables $x_1,x_2,\ldots,x_n$. We want
to compute the partition function~$Z(I)$ using only instances of
$\nCSP(\{f,\delta_0,\delta_1,U_{\lambda}\})$. That is, instances which avoid using
constraints~$U_c$.
For each $i$, let $m_{i}$ denote the number of copies
of $U_{c}$ that are applied to $x_i$, and let $m=\sum_{i=1}^n m_{i}$.
Then we can write the partition function
as $Z(I)=Z(I;c)$ where
\[Z(I;w) = \sum_{\sigma\in{\{0,1\}}^n}
\hat{Z}(\sigma) \prod_{i:\sigma_{i}=1}w^{m_{i}}=\sum_{\sigma\in{\{0,1\}}^n}
\hat{Z}(\sigma) w^{\sum_{i=1}^n m_{i}\sigma_i},\]
where $\hat{Z}(\sigma)$ denotes the value corresponding
to the assignment~$\sigma(x_i)=\sigma_i$, ignoring constraints applying $U_c$,
and $w$ is a variable.
So $\hat{Z}(\sigma)$ is the weight of~$\sigma$, taken over all constraints
other than those applying $U_{c}$. Note also that $Z(I;w)$ is a polynomial
of degree $m$ in $w$. We can evaluate $Z(I;w)$ at the point $w=\lambda^j$
by replacing each $U_c$ constraint with $j$ copies of a $U_\lambda$
constraint. This evaluation is an instance of
$\nCSP(\set{f,\delta_0,\delta_1,U_{\lambda}})$. So, using $m$ different values of~$j$
and interpolating, we learn the coefficients of the polynomial
$Z(I;w)$. Then we can put $w=c$ to evaluate $Z(I)$.
\end{proof}

\begin{lemma}
\label{six} Suppose $f\in\mathcal{F}_2$   is not of product type, and
$g\in\mathcal{F}_2$   is not pure affine.
Then $\nCSP(\{f,g,\delta_0,\delta_1\})$
is \nP-hard.
\end{lemma}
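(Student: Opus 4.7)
The plan is to leverage Lemma~\ref{five}: since $f$ is not of product type, that lemma already makes $\nCSP(\{f,\delta_0,\delta_1,U_\lambda\})$ \nP-hard for any positive rational $\lambda\neq 1$. Hence it suffices to show that $\{g,\delta_0,\delta_1\}$ can simulate some nontrivial $U_\lambda$; this gives $\nCSP(\{f,\delta_0,\delta_1,U_\lambda\})\redT\nCSP(\{f,g,\delta_0,\delta_1\})$ and the claimed hardness follows.

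I split on the shape of $g$. If $g$ does not have affine support, Lemma~\ref{lem:CH} already makes $\nCSP(\{g\})$ \nP-hard. If $g$ has affine support but is not product-like, Lemma~\ref{four} does the same. So the only remaining case is: $g$ has affine support, is product-like, and (by hypothesis) is not pure affine.

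In this remaining case, a structural analysis forces $g$ to factor coordinatewise. For each useful index $i$ the product-like hypothesis gives $\lambda_i>0$ with $g(\substuple{y}{i}{0})=\lambda_i\,g(\substuple{y}{i}{1})$ for all $y\in\{0,1\}^{k-1}$; positivity of $\lambda_i$ forces $\substuple{y}{i}{0}\in R_g\iff\substuple{y}{i}{1}\in R_g$, so $R_g$ is closed under flipping any useful coordinate. Combined with non-useful coordinates being constant on the affine set $R_g$, this makes $R_g$ exactly the tuples agreeing with some fixed pattern on non-useful positions and arbitrary on useful ones, and iterating the ratio relations shows $g$ restricted to $R_g$ is a positive scalar multiple of a product of coordinatewise unary weights determined by the $\lambda_i$. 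Were every such $\lambda_i$ equal to $1$, then $g$ would be constant on $R_g$ and hence pure affine, contradicting the hypothesis; so some useful index $i^*$ has $\lambda_{i^*}\neq 1$.

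To harvest a nontrivial $U_\lambda$ from $g$: pick $y$ with $\substuple{y}{i^*}{0},\substuple{y}{i^*}{1}\in R_g$ (possible because $i^*$ is useful), introduce fresh variables $v_1,\ldots,v_k$, post the constraint $g(v_1,\ldots,v_k)$, and pin each $v_j$ ($j\neq i^*$) to $y_j$ by a $\delta_{y_j}$ constraint. Summing out the pinned variables leaves a unary weight on $v_{i^*}$ equal to $h(v_{i^*})=g(\substuple{y}{i^*}{v_{i^*}})$, satisfying $h(0)/h(1)=\lambda_{i^*}$; hence $h$ is a positive scalar multiple of $U_{1/\lambda_{i^*}}$, which is nontrivial since $\lambda_{i^*}\neq 1$. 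Feeding this into Lemma~\ref{five} closes the reduction. The main obstacle is the structural claim that product-like plus affine support forces the coordinatewise factorisation above; once that is in hand, ``not pure affine'' immediately produces a non-unit $\lambda_{i^*}$, and the pinning gadget is routine.
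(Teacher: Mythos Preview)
Your overall plan---simulate a nontrivial $U_\lambda$ from $\{g,\delta_0,\delta_1\}$ and then invoke Lemma~\ref{five}---matches the paper's, and your extra case split via Lemma~\ref{four} is a legitimate shortcut. The gap is in the residual case: the structural claim that ``non-useful coordinates are constant on the affine set $R_g$'' is false, and without it both your factorisation and your pinning gadget collapse.

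Concretely, an index $i$ is useful iff some $x\in R_g$ has $x\oplus e_i\in R_g$; for affine $R_g=a+V$ this says $e_i\in V$. But $i$ being non-constant on $R_g$ only requires \emph{some} vector of $V$ to have a $1$ in position~$i$, which is strictly weaker. Take $g$ of arity~$2$ with $g(0,0)=1$, $g(1,1)=2$, $g(0,1)=g(1,0)=0$. Here $R_g=\{(0,0),(1,1)\}$ is affine, neither index is useful (so $g$ is vacuously product-like), $g$ is not pure affine, yet neither column is constant. There is no useful $i^*$ for you to pick, and pinning the other coordinate to $0$ or $1$ yields only $\delta_0$ or a scalar multiple of $\delta_1$---never a nontrivial $U_\lambda$.

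The paper sidesteps this by using \emph{projection} (Lemma~\ref{lem:project}) rather than pinning at the extraction step, together with an induction on arity. One selects a non-constant column~$h$; if pinning $h$ to $0$ (or to $1$) still leaves two distinct non-zero values in the range one recurses on the smaller-arity function; otherwise $g$ takes a single value $w_0$ on rows with $x_h=0$ and a single value $w_1\neq w_0$ on rows with $x_h=1$, and since $R_g$ is affine these two row-sets have equal size, so projecting onto column~$h$ yields a scalar multiple of $U_{w_1/w_0}$. In the example above this produces $U_2$. Your pinning-only extraction cannot handle such $g$; you need projection (or an equivalent summing-out argument) to close the gap.
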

\begin{proof}
If $g$ does not have affine support we are done by Lemma~\ref{lem:CH}.
So suppose that $g$ has affine support.
Since $g$ is not pure affine,
the range of $g$ contains at least two non-zero
values.

The high-level idea will be to use pinning and bisection to extract a
non-trivial unary weight function~$U_\lambda$ from~$g$. Then we can reduce from
$\nCSP(\{f,\delta_0,\delta_1,U_{\lambda}\})$, which we proved \nP-hard in Lemma~\ref{five}.

Look at~the relation $R_g$, viewed as a table.
If every column were constant, then $g$ would be pure affine, so this
is not the case. Select a non-constant column~with index $h$.
If there are two non-zero values
in the range of $g$
amongst
the rows of~$R_g$ that are~$0$ in column~$h$
then we derive a new function $g'$ by pinning column~$h$
to~$0$. The new function $g'$ is not pure affine, since the two
non-zero values prevent this. So we will show inductively that
$\nCSP(\set{f,g',\delta_0,\delta_1})$ is \nP-hard.
This will give the result since
$\nCSP(\{f,g',\delta_0,\delta_1\})$ trivially reduces to
$\nCSP(\{f,g,\delta_0,\delta_1\})$.

If we don't finish this way, or
symmetrically by pinning column~$h$ to~$1$, then we know
that there are distinct positive values~$w_0$ and~$w_1$ such that, for every
row~$x$ of~$R_g$ with~$0$ in column~$h$, $g(x)=w_0$ and, for every
row~$x$ of~$R_g$ with~$1$ in column~$h$, $g(x)=w_1$.
Now note that, because the underlying relation $R_g$ is affine,
it has the same number of $0$'s in column~$h$ as $1$'s.
This is because $R_g$ is the solution of a set of linear equations.
Adding the equation $x_h=0$ or $x_h=1$ exactly halves the set of solutions
in either case. We now project onto the index set $\set{h}$. We obtain
the unary weight function $U_\lambda$, with $\lambda=w_1/w_0$, on
using the earlier observation about proportional functions.
This was our goal, and completes the proof.
\end{proof}
Lemma~\ref{seven} now follows from Lemma~\ref{lem:2pinning} and Lemma~\ref{six},
completing the proof of Theorem~\ref{thm:main}.

\section{Appendix}

The purpose of this appendix is to prove Lemma~\ref{lem:pinning} for
an arbitrary fixed domain $[q]$. We
used only the special case $q=2$, which we
stated and proved as Lemma~\ref{lem:2pinning}. However, pinning appears to be
a useful technique for studying the complexity of \nCSP, so we
give a proof of the general Lemma~\ref{lem:pinning}, which we
believe will be applicable elsewhere.
\begin{appendixlemma}
For every $\mathcal{F}\subseteq \mathcal{F}_q$,
$\nCSP(\mathcal{F}\cup \bigcup_{c\in[q]}\delta_c) \redT \nCSP(\mathcal{F})$.
\end{appendixlemma}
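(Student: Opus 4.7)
The plan is to extend the Boolean argument of Lemma~\ref{lem:2pinning} to general domain size $q$. Given an instance $I$ of $\nCSP(\mathcal{F}\cup\bigcup_{c\in[q]}\delta_c)$, let $V_c$ denote the set of variables to which $\delta_c$ is applied in $I$. I would first dispose of the degenerate case by assuming the $V_c$ are pairwise disjoint (otherwise $Z(I)=0$ is returned directly), and set $V_*=V\setminus\bigcup_c V_c$. Build a base instance $I'$ of $\nCSP(\mathcal{F})$ on variable set $V_*\cup\{t_0,\dots,t_{q-1}\}$ by deleting every $\delta_c$-constraint and replacing each occurrence of a variable in $V_c$ by the fresh variable $t_c$. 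For a map $\tau:[q]\to[q]$, set $Z_\tau:=Z(I'\mid\sigma(t_c)=\tau(c)\text{ for all }c\in[q])$; the target quantity is $Z_{\mathrm{id}}=Z(I)$.

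The next step is to reduce to injective $\tau$ by inclusion--exclusion on partitions. For each set-partition $P$ of $[q]$, let $I'_P$ be obtained from $I'$ by identifying $t_c$ with $t_{c'}$ whenever $c,c'$ share a block of $P$; then $Z(I'_P)=\sum_\tau Z_\tau$, summed over those $\tau$ constant on the blocks of $P$. M\"obius inversion on the partition lattice of $[q]$ then expresses, for every partition $P$, the sum $\sum_{\tau:\,P_\tau=P}Z_\tau$ (where $P_\tau$ is the kernel partition of $\tau$) as an explicitly $\FP$-computable rational combination of the $Z(I'_P)$, each being a single oracle call to $\nCSP(\mathcal{F})$. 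In particular, from the discrete partition I obtain $S:=\sum_{\pi\in S_q}Z_\pi$.

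I would then exploit the natural symmetry group
\[\Pi:=\bigl\{\pi\in S_q:\,f(\pi(x_1),\ldots,\pi(x_k))=f(x_1,\ldots,x_k)\text{ for every arity-}k\text{ function }f\in\mathcal{F}\text{ and every }x\in[q]^k\bigr\}.\]
Relabeling a configuration by $\pi\in\Pi$ preserves the weight of every constraint of $I'$, so $Z_\tau=Z_{\pi\circ\tau}$; hence $\pi\mapsto Z_\pi$ is constant on right cosets $\Pi\pi$ in $S_q$. In the fully symmetric case $\Pi=S_q$ (analogous to Case~1 of Lemma~\ref{lem:2pinning}) every $Z_\pi$ equals $Z_{\mathrm{id}}$, yielding $Z_{\mathrm{id}}=S/q!$. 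Otherwise, for each $f\in\mathcal{F}$ of arity $k$ and each $x\in[q]^k$, I form $I'_x$ from $I'$ by adjoining a single constraint with function $f$ and scope $(t_{x_1},\ldots,t_{x_k})$, so that
\[Z(I'_x)=\sum_{\tau:[q]\to[q]} f\bigl(\tau(x_1),\ldots,\tau(x_k)\bigr)\,Z_\tau;\]
applying the same partition-lattice M\"obius inversion separately to this weighted sum makes the injective part $T_{f,x}:=\sum_{\pi\in S_q} f(\pi(x_1),\ldots,\pi(x_k))\,Z_\pi$ $\FP$-computable.

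The main obstacle is showing that, as $(f,x)$ varies, the resulting linear system in the right-coset values $\{Z_{\Pi\rho}:\Pi\rho\in\Pi\backslash S_q\}$ has full rank. A sufficient separation property is immediate from the definition of $\Pi$: for distinct right cosets $\Pi\rho_1\neq\Pi\rho_2$ one has $\rho_1\rho_2^{-1}\notin\Pi$, so some $f\in\mathcal{F}$ and tuple $y$ satisfy $f(\rho_1\rho_2^{-1}(y))\neq f(y)$; taking $x=\rho_2^{-1}(y)$ then yields $f(\rho_1(x))\neq f(\rho_2(x))$. A short linear-algebra argument --- working in the finite-dimensional space of functions $S_q\to\Rationals$ that are constant on right cosets of $\Pi$, which by point-separation is spanned by the functions $\rho\mapsto f(\rho(x))$ just constructed --- promotes this pointwise separation into an invertible coefficient matrix, allowing us to solve for each coset value and in particular for $Z_{\mathrm{id}}$. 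Combined with the partition-lattice inversion, this delivers the desired polynomial-time Turing reduction $\nCSP(\mathcal{F}\cup\bigcup_{c\in[q]}\delta_c)\redT\nCSP(\mathcal{F})$.
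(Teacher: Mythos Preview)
Your overall architecture matches the paper's: both use M\"obius inversion on the partition lattice of $[q]$ to strip away the non-injective assignments to $(t_0,\dots,t_{q-1})$, and both then need to isolate the identity permutation among all $\pi\in S_q$. (The paper packages the M\"obius step as a separate reduction through an auxiliary problem $\nCSP^{\neq}(\mathcal{F})$ allowing one $q$-ary disequality constraint; you do it inline. Your symmetry group $\Pi$ is exactly the paper's equivalence on permutations, rephrased.)

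The genuine gap is in your last paragraph. You assert that because the functions $\rho\mapsto f(\rho(x))$ (as $f\in\mathcal{F}$ and $x$ range) separate distinct right cosets of $\Pi$, they \emph{span} the space of right-$\Pi$-invariant functions on $S_q$. Point separation does not imply linear spanning. Take $q=3$ and $\mathcal{F}=\{f\}$ with the unary $f(c)=c$. Then $\Pi$ is trivial, so the target space has dimension $|S_3|=6$; but the available coefficient functions are $\rho\mapsto\rho(0),\rho(1),\rho(2)$ together with the constant~$1$ coming from $S$, and since $\rho(0)+\rho(1)+\rho(2)\equiv3$ these span only a $3$-dimensional subspace. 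A direct check shows the indicator of the identity permutation is not in this span, so your linear system cannot recover $Z_{\mathrm{id}}$.

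The repair is short and keeps your approach intact: instead of adjoining a \emph{single} constraint, adjoin an arbitrary finite multiset of constraints $(f_j,x^j)$ on the $t$-variables. The coefficient of $Z_\rho$ then becomes the product $\prod_j f_j(\rho(x^j))$, so your coefficient functions now form a unital \emph{subalgebra} of the right-$\Pi$-invariant functions that separates points; a finite Stone--Weierstrass argument (for each coset $C$, multiply together functions vanishing on each $C'\neq C$ but not on $C$) produces every coset indicator, and the system becomes invertible. The paper reaches the same endpoint by a different mechanism: it fixes one constraint bundle (one $(f_i,x^i)$ per non-identity class, with multiplicities $z_i$ chosen via a Schwartz--Zippel argument to make the resulting weights $\lambda_i$ pairwise distinct), adds $\ell$ copies of the whole bundle, and interpolates in $\ell$ using a Vandermonde system. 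Your algebra-spanning route, once patched, is arguably cleaner; the paper's interpolation route has the virtue of needing only $O(h)$ oracle calls rather than one per coset.
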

In order to prove the lemma, we introduce a useful, but less natural,
variant of \nCSP.
Suppose
$\mathcal{F}\subseteq \mathcal{F}_q$.
An instance $I$ of $\nCSP^{\neq}(\mathcal{F})$ consists of
a set $V$ of variables and a set $\mathcal{C}$ of constraints, just
like an instance of $\nCSP(\mathcal{F})$.
In addition, the instance may contain a \emph{single} extra constraint~$C$
applying the arity-$q$ \emph{disequality} relation $\chi_{\neq}$
with scope $(v_{C,1},\ldots,v_{C,q})$.

The disequality relation $\chi_{\neq}$ is defined by
$\chi_{\neq}(x_1,\ldots,x_q)=1$ if $x_1,\ldots,x_q\in[q]$ are pairwise distinct.
That is, if they are a permutation of the domain $[q]$.
Otherwise, $\chi_{\neq}(x_1,\ldots,x_q)=0$.

Lemma~\ref{lem:pinning} follows immediately from Lemma~\ref{lem:first} and~\ref{lem:second} below.

\begin{lemma}
\label{lem:first}
For every
$\mathcal{F}\subseteq \mathcal{F}_q$,
$\nCSP(\mathcal{F}\cup \bigcup_{c\in[q]}\delta_c) \redT\nCSP^{\neq}(\mathcal{F})$.
\end{lemma}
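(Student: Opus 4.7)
The plan is to absorb the $q$ pinning constraints of~$I$ into a single disequality constraint on fresh variables, obtaining an instance of $\nCSP^{\neq}(\mathcal{F})$ whose partition function is the sum over all $q!$ ``relabelings'' of the intended pinning; then use finitely many ``decorated'' oracle queries to isolate the identity relabeling. Given an instance $I$ of $\nCSP(\mathcal{F}\cup\bigcup_{c\in[q]}\delta_c)$, set $V_c=\{v:\delta_c(v)\in\mathcal{C}\}$; we may assume the $V_c$ are pairwise disjoint (otherwise $Z(I)=0$). Let $V_*=V\setminus\bigcup_c V_c$, introduce fresh variables $t_0,\ldots,t_{q-1}$, and let $I'$ be the $\nCSP^{\neq}(\mathcal{F})$ instance on $V_*\cup\{t_0,\ldots,t_{q-1}\}$ obtained by replacing every occurrence of $v\in V_c$ in an $\mathcal{F}$-constraint by~$t_c$ and attaching the single permitted disequality $\chi_{\neq}(t_0,\ldots,t_{q-1})$. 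The disequality forces $(t_0,\ldots,t_{q-1})$ to be a permutation of~$[q]$, so
\[
Z(I')=\sum_{\pi\in S_q} Z_\pi(I),\qquad Z_\pi(I):=Z\bigl(I\bigm| \sigma(v)=\pi(c)\text{ for all }v\in V_c\bigr).
\]
The goal is $Z(I)=Z_{\mathrm{id}}(I)$, but in general the other $Z_\pi(I)$ do not cancel.

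To extract $Z_{\mathrm{id}}(I)$ we vary the oracle query. For any list $\mathbf{d}=((f^{(1)},x^{(1)}),\ldots,(f^{(m)},x^{(m)}))$ with $f^{(j)}\in\mathcal{F}$ of arity~$k_j$ and $x^{(j)}\in[q]^{k_j}$, let $I'_{\mathbf{d}}$ augment $I'$ with the extra $\mathcal{F}$-constraints $f^{(j)}(t_{x^{(j)}_1},\ldots,t_{x^{(j)}_{k_j}})$. Each $I'_{\mathbf{d}}$ is still a legal $\nCSP^{\neq}(\mathcal{F})$ instance --- it carries only the one $\chi_{\neq}$ already in $I'$ --- and, writing $\phi_{f,x}(\pi):=f(\pi(x_1),\ldots,\pi(x_k))$,
\[
Z(I'_{\mathbf{d}})=\sum_{\pi\in S_q}\Bigl(\prod_{j=1}^m \phi_{f^{(j)},x^{(j)}}(\pi)\Bigr)Z_\pi(I).
\]
Taking $\mathbb{Q}$-linear combinations of oracle answers for various~$\mathbf{d}$, we can therefore compute $\sum_\pi P(\pi)\,Z_\pi(I)$ for any $P$ in the $\mathbb{Q}$-algebra $\mathcal{A}\subseteq\mathbb{Q}^{S_q}$ generated by the $\phi_{f,x}$'s together with the constant~$1$ (obtained from $\mathbf{d}=\emptyset$).

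Let $G\le S_q$ be the stabilizer $G=\{\tau\in S_q : f(\tau y)=f(y)\text{ for all }f\in\mathcal{F}\text{ and }y\}$. The weight-preserving relabeling $\sigma\mapsto\tau\circ\sigma$ shows $Z_\pi(I)=Z_{\tau\pi}(I)$ for $\tau\in G$, and $\phi_{f,x}(\tau\pi)=\phi_{f,x}(\pi)$ similarly; so both $Z_\pi$ and every $P\in\mathcal{A}$ are constant on right cosets $G\pi$. The main step, and the key obstacle, is to verify that $\mathcal{A}$ equals the whole space of coset-constant functions --- equivalently, that $\mathbf{1}_G\in\mathcal{A}$. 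By a standard Lagrange / Stone--Weierstrass argument on the finite set $S_q/G$ (a unital $\mathbb{Q}$-subalgebra of $\mathbb{Q}^X$ that separates the points of a finite set~$X$ equals $\mathbb{Q}^X$), this reduces to showing that the $\phi_{f,x}$'s separate cosets: for $G\pi_1\ne G\pi_2$, the element $\sigma:=\pi_1\pi_2^{-1}$ is not in~$G$, so some $f\in\mathcal{F}$ and $y\in[q]^{\mathrm{arity}(f)}$ satisfy $f(\sigma y)\ne f(y)$, and then $x:=\pi_2^{-1}(y)$ gives $\phi_{f,x}(\pi_1)=f(\sigma y)\ne f(y)=\phi_{f,x}(\pi_2)$. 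Consequently there exist rationals $c_\alpha$ and lists $\mathbf{d}(\alpha)$ (depending only on~$\mathcal{F}$ and~$q$) with
\[
\sum_\alpha c_\alpha\, Z(I'_{\mathbf{d}(\alpha)})=\sum_{\pi\in S_q}\mathbf{1}_G(\pi)\,Z_\pi(I)=|G|\cdot Z_{\mathrm{id}}(I),
\]
and $Z(I)=Z_{\mathrm{id}}(I)$ is recovered from $O(1)$ oracle calls by dividing by~$|G|$, giving the desired polynomial-time Turing reduction.
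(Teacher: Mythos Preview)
Your proof is correct and follows the same overall strategy as the paper: replace the pinned variables by fresh variables $t_0,\ldots,t_{q-1}$ under the single disequality constraint, note that $Z(I')=\sum_{\pi\in S_q}Z_\pi(I)$, observe that the contributions $Z_\pi(I)$ (and the decoration weights) are constant on cosets of the symmetry group $G$ of~$\mathcal{F}$ (the paper calls these ``equivalence classes'' and remarks in a footnote that they are cosets), and then add extra $\mathcal{F}$-constraints on the $t_c$'s to separate the cosets and isolate the identity coset.

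The difference is in how the last step is carried out. The paper chooses, for each non-identity coset representative $\pi_i$, a witness $(f_i,x^i)$ with $f_i(x^i)>f_i(\pi_i x^i)$, adds $\ell z_i$ copies of each witness constraint, and uses a Schwartz--Zippel style counting argument to pick the integer exponents $z_i$ so that the resulting per-coset weights $\lambda_i$ are pairwise distinct; it then evaluates $Z(I'_\ell)=\sum_i c_i\lambda_i^\ell$ at enough values of~$\ell$ and recovers $c_1=n_1Z(I)$ by Vandermonde interpolation (Lemma~3.2 of~\cite{DG}). You instead argue structurally that the $\phi_{f,x}$'s separate cosets and hence, by the finite Stone--Weierstrass/Lagrange argument, generate the full $\mathbb{Q}$-algebra of coset-constant functions, so $\mathbf{1}_G$ is directly realizable as a fixed rational combination of oracle answers. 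Both routes use $O(1)$ oracle calls (everything depends only on $q$ and~$\mathcal{F}$); your version is somewhat cleaner in that it sidesteps the existence argument for the $z_i$'s and the external interpolation lemma, while the paper's version is more explicitly constructive once the $z_i$'s are in hand.
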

\begin{proof}

We follow the proof lines of Lemma~\ref{lem:2pinning}, but instead of
subtracting the contribution corresponding to configurations in which some $t_i$'s get
the same value, we use the disequality relation to restrict the partition function
to configurations in which they get distinct values.

Say that $\mathcal{F}$ is \emph{symmetric} if it is the case that
for every arity-$k$ function $f\in\mathcal{F}$ and every
tuple $x\in[q]^k$ and every permutation $\pi:[q]\rightarrow[q]$,
$f(x_1,\ldots,x_k) = f(\pi(x_1),\ldots,\pi(x_k))$.

Let
$I$ be an instance of $\nCSP(\mathcal{F}\cup \bigcup_{c\in[q]}\delta_c)$
with variable set $V$. Let $V_c$ be the set of variables $v\in V$ to which
the constraint $\delta_c(v)$ is applied. Assume without loss of generality
that the sets $V_c$ are pairwise disjoint.
Let $V_{q}=V \setminus\bigcup_{c\in[q]} V_c$.
We construct an instance $I'$ of $\nCSP^{\neq}(\mathcal{F})$.
The instance has variables $V_q \cup \{t_0,\ldots,t_{q-1}\}$.
Every constraint~$C$ of~$I$
involving a function $f\in\mathcal{F}$
corresponds to a constraint~$C'$ of~$I'$. Here
$C'$ is the same as~$C$ except that variables in~$V_c$ are replaced with~$t_c$,
for each $c\in[q]$. Also, we add a new disequality constraint
to the new variables $t_0,\ldots,t_{q-1}$.

\noindent{\bf Case 1.
$\mathcal{F}$ is symmetric:\quad}

By construction, $Z(I')= \sum_{y_0,\ldots,y_{q-1}} Z(I'\mid
\sigma(t_0)=y_0,\ldots,\sigma(t_{q-1})=y_{q-1})$, where the sum is over all permutations
$y_0,\ldots,y_{q-1}$ of~$[q]$.
By symmetry, the summands are all the same, so
$Z(I') = q! Z(I' \mid \sigma(t_0)=0,\ldots,\sigma(t_{q-1})=q-1) =
q! Z(I)$.

\noindent{\bf Case 2. $\mathcal{F}$ is not symmetric:\quad}

Say that two permutations $\pi_1:[q]\rightarrow[q]$ and $\pi_2:[q]\rightarrow[q]$
are \emph{equivalent} if, for every $f\in\mathcal{F}$ and every
tuple $x\in[q]^k$,  $f(\pi_1(x_1),\ldots,
\pi_1(x_k)) = f(\pi_2(x_1),\ldots,\pi_2(x_k))$.
Partition the permutations $\pi:[q]\rightarrow[q]$ into equivalence classes.
Let $h$ be the number of equivalence classes and $n_i$ be the size of the $i$'th
equivalence class, so $n_1 + \cdots + n_h=q!$.\footnote{In fact, it can be shown that
these equivalence classes are cosets of the symmetry group of~$f$,
and hence are of equal size, though we do not use this fact here.}
Let $\{\pi_1,\ldots,\pi_h\}$ be a set of representatives
of the equivalence classes with $\pi_1$ being the identity.
We know that $n_1\neq q!$ since $\mathcal{F}$ is not symmetric.

For a positive integer~$\ell$
we will now build an instance $I'_\ell$ by adding new constraints to~$I'$.
For each $\pi_i$ other than $\pi_1$ we add constraints as follows.
Choose a function~$f_i\in\mathcal{F}$
and a tuple~$y$
such that $f_i(y_1,\ldots,y_k)\neq f_i(\pi_i(y_1),\ldots,\pi_i(y_k))$.
If $f_i(y_1,\ldots,y_k)> f_i(\pi_i(y_1),\ldots,\pi_i(y_k))$ then
define the $k$-tuple $x^i$ by $(x^i_1,\ldots,x^i_k)
=(y_1,\ldots,y_k)$.
Otherwise, let $n$ be the order of the permutation $\pi_i$
and let $g_r$ denote $f_i(\pi_i^r(y_1),\ldots,\pi_i^r(y_k))$.
Since $g_0<g_1$ and $g_n=g_0$ there exists a $\xi\in\{1,\ldots,n-1\}$
such that $g_\xi>g_{\xi+1}$.
Let $(x^i_1,\ldots,x^i_k)=(\pi^{\xi}(y_1),\ldots,\pi^{\xi}(y_k))$ so
$f_i(x^i_1,\ldots,x^i_k)> f_i(\pi_i(x^i_1),\ldots,\pi_i(x^i_k))$.

Let $w_{ij}$ denote $f_i(\pi_j(x^i_1),\ldots,\pi_j(x^i_k))$ so,
since $\pi_1$ is the identity, we have just ensured that $w_{i1}>w_{ii}$.
Let $s^i=(t_{x_1^i},\ldots,t_{x_k^i})$, and let $0\leq z_i\leq h$ $(i=2,\ldots,h)$ be
positive integers, which we will determine below.
Add $\ell z_i$  new constraints to $I'_\ell$ with relation~$f_i$ and scope~$s^i$.
Let $\lambda_i = \prod_{\gamma=2}^h w_{\gamma i}^{z_\gamma}$.
Note that, given $\sigma(t_0)=\pi_i(0),\ldots,\sigma(t_{q-1})=\pi_i(q-1)$,
the contribution to $Z(I'_\ell)$ for the new constraints is
\[
\prod_{\gamma=2}^h f_\gamma(\sigma(t_{x^\gamma_1}),\ldots,\sigma(t_{x^\gamma_k}))^{z_\gamma \ell}=
\prod_{\gamma=2}^h f_\gamma(\pi_i(x^\gamma_1),\ldots,\pi_i(x^\gamma_k))^{z_\gamma \ell}=
\prod_{\gamma=2}^h w_{\gamma,i}^{z_\gamma \ell}
= \bigg(\prod_{\gamma=2}^h w_{\gamma,i}^{z_{\gamma}} \bigg)^\ell=
{\lambda_i}^\ell.\]
So
\[Z(I'_\ell) = \sum_{i=1}^h n_i\,
Z(\,I'\mid \sigma(t_0)=\pi_i(0),\ldots,\sigma(t_{q-1})=\pi_i(q-1)\,)
\,\lambda_i^\ell.\]

We have ensured that $\lambda_1>0$,
since $w_{i1}>w_{ii}\geq 0$, so $w_{i1}>0$ for all~$i=2,\ldots,h$.
We now choose the $z_i$'s so that $\lambda_i\neq \lambda_1$ for all $i=2,\ldots,h$.
If $w_{\gamma i}=0$ for any $\gamma=2,\ldots,h$, we have $\lambda_i=0$ and hence
$\lambda_i\neq\lambda_1$. Thus we will assume, without loss of generality, that $w_{\gamma i}>0$ for all $\gamma=2,\ldots,h$ and $i=2,\ldots,h'$, where $h'\leq h$.
Then we have
\[\frac{\lambda_i}{\lambda_1}\ =\ \prod_{\gamma=2}^h \Big(\frac{w_{\gamma i}}{w_{\gamma 1}}\Big)^{z_\gamma}\ =\ e^{\sum_{\gamma=2}^h\alpha_{\gamma i}z_\gamma}
\qquad (i=2,\ldots,h'),\]
where $\alpha_{\gamma i}=\ln(w_{\gamma i}/w_{\gamma 1})$. Note that $\alpha_{ii}< 0$,
since $w_{ii}<w_{i1}$. We need to find an integer vector $z=(z_2,\ldots,z_h)$
so that none of the linear forms $\mathcal{L}_i(z)=\sum_{\gamma=2}^h\alpha_{\gamma i}z_\gamma$ is zero, for $i=2,\ldots,h'$. We do this using a proof method similar
to the Schwartz-Zippel Lemma. (See, for example,~\cite{Schwar80}.) None of the $\mathcal{L}_i(z)$ is identically zero, since  $\alpha_{ii}\neq 0$. Consider the integer
vectors $z\in[h]^{h-1}$. At most $h^{h-2}$ of these can make $\mathcal{L}_i(z)$ zero for
any $i$, since the equation $\mathcal{L}_i(z)=0$ makes $z_i$ a linear function of $z_\gamma$
($\gamma\neq i$). Therefore there are at most $(h'-1)h^{h-2}<h^{h-1}$ such $z$ which make any
$\mathcal{L}_i(z)$ zero. Therefore there must be a vector $z\in[h]^{h-1}$ for which none of
the $\mathcal{L}_i(z)$ is zero, and this is the vector we require.

Now, by combining terms with equal $\lambda_i$
and ignoring terms with $\lambda_i=0$,
we can view $Z(I'_\ell)$
as a sum $Z(I'_\ell) = \sum_i c_i \lambda_i^{\ell}$
where
the $\lambda_i$'s are positive and pairwise distinct
and
\[c_1 = n_1 Z(I'\mid \sigma(t_0)=0,\ldots,\sigma(t_{q-1})=q-1).\]
Thus, by Lemma~3.2 of~\cite{DG} we can interpolate to recover $c_1$.
Dividing by $n_1$, we get
\[Z(I'\mid \sigma(t_0)=0,\ldots,\sigma(t_{q-1})=q-1)=Z(I).\qedhere\]
\end{proof}

\begin{lemma}
For every $\mathcal{F}\subseteq \mathcal{F}_q$,
$\nCSP^{\neq}(\mathcal{F}) \redT \nCSP(\mathcal{F})$.
\label{lem:second}
\end{lemma}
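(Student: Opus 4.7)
The plan is to reduce $\nCSP^{\neq}(\mathcal{F})$ to $\nCSP(\mathcal{F})$ by Möbius inversion on the partition lattice of $\{1,\ldots,q\}$, using the fact that $q$ is fixed so there are only constantly many partitions. Given an instance $I$ of $\nCSP^{\neq}(\mathcal{F})$, the only new feature compared to $\nCSP(\mathcal{F})$ is a single disequality constraint $\chi_{\neq}$ on some scope $(v_1,\ldots,v_q)$, whose effect is to restrict the sum defining $Z(I)$ to configurations $\sigma$ for which $\sigma(v_1),\ldots,\sigma(v_q)$ are pairwise distinct (equivalently, a permutation of $[q]$). If no such constraint is present, one oracle call suffices, so assume it is.

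First I would introduce, for each partition $\pi$ of $\{1,\ldots,q\}$, the instance $I_\pi$ obtained from $I$ by deleting the disequality constraint and identifying $v_i$ with $v_j$ whenever $i$ and $j$ lie in the same block of $\pi$. This $I_\pi$ is a genuine instance of $\nCSP(\mathcal{F})$, so $Z(I_\pi)$ can be computed with one oracle call. For a configuration $\sigma$, let $\pi(\sigma)$ be the partition of $\{1,\ldots,q\}$ induced by the equivalence $i\sim j \iff \sigma(v_i)=\sigma(v_j)$, and write $A_\rho = \sum_{\sigma:\pi(\sigma)=\rho} w(\sigma)$. Then by construction
\[ Z(I_\pi) \;=\; \sum_{\rho \geq \pi} A_\rho, \]
where $\rho \geq \pi$ denotes that $\rho$ is a coarsening of $\pi$.

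The disequality constraint forces $\pi(\sigma)$ to be the discrete partition $\hat{0}$ (all singletons), so $Z(I) = A_{\hat{0}}$. By Möbius inversion on the partition lattice,
\[ Z(I) \;=\; A_{\hat{0}} \;=\; \sum_{\pi} \mu(\hat{0},\pi)\, Z(I_\pi), \]
where $\mu$ is the Möbius function of the partition lattice. Since $q$ is fixed, the number of partitions is the constant Bell number $B_q$, and the Möbius coefficients are fixed rationals computable a priori; so the reduction makes $O(1)$ oracle calls and runs in polynomial time.

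There is no real obstacle here: the only thing to verify is the inclusion--exclusion identity, which is standard once one sets up the partition-lattice formulation. One could alternatively phrase the same argument directly in inclusion--exclusion form over the pairs $(i,j)$ with $\sigma(v_i)=\sigma(v_j)$, but the partition-lattice version is cleaner because the instances $I_S$ that arise from merging variables depend only on the transitive closure of $S$, i.e.\ on the induced partition.
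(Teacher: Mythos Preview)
Your proposal is correct and essentially identical to the paper's proof: both delete the disequality constraint, form instances $I_\pi$ by identifying the scoped variables according to each partition $\pi$ of $[q]$, observe that $Z(I_\pi)=\sum_{\rho\geq\pi}A_\rho$, and recover $Z(I)=A_{\hat 0}$ by M\"obius inversion on the partition lattice, using that $q$ is fixed so only constantly many oracle calls are needed. The only cosmetic difference is notation (the paper writes $\mu(\eta)$ for your $\mu(\hat 0,\eta)$ and spells out the inversion identity from scratch rather than citing it).
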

\begin{proof}

We use M\"obius inversion for posets, following the lines of the proof of~\cite[Theorem~8]{BD}.\footnote{Lov\'asz~\cite{Lovasz67} had previously used
M\"obius inversion in a similar context.}
Consider the set of partitions of $[q]$. Let $\underline{0}$
denote the partition with $q$ singleton classes. Consider
the partial order in which
$\eta\leq \theta$ iff every class of $\eta$ is a subset of some class of~$\theta$.
Define $\mu(\underline{0})=1$ and
for any $\theta\neq \underline{0}$
define $\mu(\theta)=-\sum_{\eta\leq \theta,\eta\neq\theta} \mu(\eta)$.
Consider the sum $\sum_{\eta \leq \theta} \mu(\eta)$. Clearly, this sum is~$1$
if~$\theta=\underline{0}$. From the definition of $\mu$,
it is also easy to see that the sum is~$0$ otherwise, since
\[\sum_{\eta \leq \theta} \mu(\eta) = \mu(\theta)+
\sum_{\eta \leq \theta,\eta\neq \theta} \mu(\eta)
=0.\]

Now let $I$ be an instance of $ \nCSP^{\neq}(\mathcal{F})$
with a disequality constraint applied to variables $t_0,\ldots,t_{q-1}$.
Let $V$ be the set of variables of~$I$.
Given a configuration $\sigma:V \rightarrow[q]$, let
$\vartheta(\sigma)$ be the partition of~$[q]$ induced by
of $(\sigma(t_0),\ldots,\sigma(t_{q-1}))$. Thus $i$ and $j$ in $[q]$ are in
the same class of $\vartheta(\sigma)$ iff $\sigma(t_i)=\sigma(t_j)$.
We say that a partition $\eta$ is consistent with $\sigma$
(written $\eta \preccurlyeq \sigma$) if $\eta\leq \vartheta(\sigma)$.
Note that $\eta \preccurlyeq \sigma$ means that for any $i$ and $j$
in the same class of $\eta$, $\sigma(t_i)=\sigma(t_j)$.

Let $\Omega$ be the set of configurations $\sigma$ that
satisfy all constraints in~$I$ except possibly the disequality constraint.
Then
$Z(I)=\sum_{\sigma\in\Omega} w(\sigma) \mathds{1}_\sigma$, where $\mathds{1}_\sigma=1$
if $\sigma$ respects the disequality constraint, meaning that $\vartheta(\sigma)=
\underline{0}$, and $\mathds{1}_\sigma=0$~otherwise.
By the M\"obius inversion formula derived above,
\[Z(I)\,=\,\sum_{\sigma\in\Omega} w(\sigma)
\sum_{\eta \leq \vartheta(\sigma)} \mu(\eta).\]
Changing the order of summation, we get
\[Z(I)\,=\,\sum_{\eta} \mu(\eta) \sum_{\eta\leq \theta}\hspace{1pt}
\sum_{\sigma\in \Omega:\vartheta(\sigma)=\theta} w(\sigma)\,=\,
\sum_{\eta} \mu(\eta) \sum_{\sigma\in \Omega: \eta \preccurlyeq \sigma}
w(\sigma).\]

Now note that $\sum_{\sigma: \eta \preccurlyeq \sigma} w(\sigma)$ is
the partition function $Z(I_\eta)$ of an instance $I_\eta$ of
$\nCSP(\mathcal{F})$. The instance $I_\eta$ is formed from~$I$ by
ignoring the disequality constraint, and identifying variables in
$t_0,\ldots,t_{q-1}$ whose indices are in the same class of~$\eta$.
Thus we can compute all the $Z(I_\eta)$ in $\nCSP(\mathcal{F})$.
Finally, $Z(I) = \sum_{\eta} \mu(\eta) Z(I_\eta)$, completing the
reduction.
\end{proof}

\end{document}